\documentclass[10pt,journal, a4paper]{IEEEtran}
\normalsize
\usepackage{graphicx}
\usepackage{multicol}

\usepackage{lipsum}
\usepackage{diagbox}
\usepackage{blkarray}

\usepackage{hyperref}
\usepackage{float}
\usepackage{url}
\usepackage[utf8]{inputenc} 
\usepackage{array}
\usepackage[utf8]{inputenc} 
\usepackage[T1]{fontenc}
\usepackage{amsmath}
\usepackage{stfloats}
\usepackage{amsmath}[ntheorem]
\usepackage{mathtools}
\usepackage{amsfonts}
\usepackage{tabstackengine}
\usepackage{amssymb}
\usepackage{amsthm}
\usepackage{enumerate}
\usepackage{cite}
\usepackage{calc}
\usepackage{nicematrix}
\usepackage{color}
\usepackage{epsfig}
\usepackage{setspace}
\usepackage{pstricks}
\usepackage{cancel}
\usepackage{multirow}
\usepackage{mathrsfs}
\usepackage{algorithm}
\usepackage{algpseudocode}
\usepackage{multirow}
\usepackage{romannum}
\usepackage{diagbox}
\algnewcommand{\algorithmicor}{\textbf{ or }}
\algnewcommand{\OR}{\algorithmicor}
\usepackage{booktabs,makecell}
\usepackage{mathrsfs}
\usepackage{enumitem}
\newtheorem{defn}{Definition}
\newtheorem{thm}{Theorem}

\newtheorem{prop}{Proposition}

\newtheorem{remark}{Remark}

\theoremstyle{definition}
\newtheorem{example}{Example}
\graphicspath{{./images/}}
\def\Ddots{\mathinner{\mkern1mu\raise\p@
		\vbox{\kern7\p@\hbox{.}}\mkern2mu
		\raise4\p@\hbox{.}\mkern2mu\raise7\p@\hbox{.}\mkern1mu}}
\makeatother

\begin{document}
	\pagenumbering{arabic}
	\title{Combinatorial Multi-Access Coded Caching with Private Caches}
	\author{\IEEEauthorblockN{Dhruv Pratap Singh, Anjana A. Mahesh and B. Sundar Rajan}\\
		\IEEEauthorblockA{Department of Electrical Communication Engineering, Indian Institute of Science, Bengaluru}\\E-mail: \{dhruvpratap,anjanamahesh,bsrajan\}@iisc.ac.in}
	\maketitle
	\begin{abstract}
		We consider a variant of the coded caching problem where users connect to two types of caches, called private and access caches. The problem setting consists of a server with a library of files and a set of access caches. Each user, equipped with a private cache, connects to a distinct $r-$subset of the access caches. The server populates both types of caches with files in uncoded format. For this setting, we provide an achievable scheme and derive a lower bound on the number of transmissions for this scheme. We also present a lower and upper bound for the optimal worst-case rate under uncoded placement for this setting using the rates of the Maddah-Ali--Niesen scheme for dedicated and combinatorial multi-access coded caching settings, respectively. Further, we derive a lower bound on the optimal worst-case rate for any general placement policy using cut-set arguments. We also provide numerical plots comparing the rate of the proposed achievability scheme with the above bounds, from which it can be observed that the proposed scheme approaches the lower bound when the amount of memory accessed by a user is large. Finally, we discuss the optimality w.r.t worst-case rate when the system has four access caches. 
	\end{abstract}
	\begin{IEEEkeywords}
		Coded Caching, Combinatorial Multi-Access Network, Index Coding, Cut-Set Bound
	\end{IEEEkeywords}
	\section{Introduction}
	\textit{Coded caching} is a spectrum-sharing technique for caching systems, introduced by Maddah-Ali and Niesen in their landmark paper\cite{MAN}, that helps in reducing network traffic during peak hours. It operates in two phases: the \textit{placement phase} and the \textit{delivery phase}. During the placement phase, which occurs when the network load is low, the cache memories in the system are populated with contents in either coded\cite{CFL},\cite{J},\cite{AG} or uncoded fashion\cite{WTP1},\cite{YMA}, while adhering to the memory constraint. During the delivery phase, which commences after all users make their demands known, the server seeks to satisfy the demands of all the users with a minimum number of transmissions. The objective of a coded caching problem is to jointly design a placement and a delivery scheme that minimizes the number of file transmissions required. The scheme introduced by Maddah-Ali and Niesen\cite{MAN}, referred to as the MAN scheme, addressed the dedicated coded caching problem where a central server having $N$ files of equal length connects to $K$ users via a shared error-free link. Each user in this network possessed a dedicated cache of size $M$ ($M\leq N$) files. The MAN scheme was proven optimal \cite{WTP} under uncoded placement in the $N\geq K$ regime, where the optimality is w.r.t minimizing the rate of transmission, i.e., the load of the shared link normalized by the file size, in the delivery phase.
	
	Coded caching was studied for various other settings like decentralized placement\cite{MUD}, shared caches\cite{PUE},\cite{IZY}, hierarchical networks \cite{KNAD}, with secure delivery\cite{STC}, with privacy\cite{RPKP}, and many more. In the shared cache setting, the cache memories are not present at the users, but are shared among multiple users. Another setting in which the cache memories are not private to the users was the multi-access coded caching setting \cite{HKD,SPE}, where the cache memories are present at multiple access points in the system and not at the users. Each user could  connect to  multiple access points (caches) as well as receive broadcast transmissions from the server. Papers \cite{PD, BE} studied the multi-access coded caching setting with a combinatorial connectivity imposed between the access points and the users, and hence the setting in these papers is called the combinatorial multi-access coded caching.  
	
	This work considers an extension of the combinatorial multi-access coded caching setting where the users not only connect to the cache memories at the access points but also are endowed with their own private cache memories. Previous works in literature that considered users with access to two different types of caches, one shared between multiple users and the other private to a user, includes \cite{MR},\cite{PNR},\cite{PNR1}. The difference between these settings and the one in this paper is that, in all of \cite{MR},\cite{PNR},\cite{PNR1}, a user has access to only one access cache in addition to its private cache, whereas in this paper, each user has access to the cache memories at multiple access points as well as its private cache. 
	
	We consider a model where a server with $N$ files connects to $K$ users and $\Lambda$ access caches via an error-free wireless link. Each user connects to a distinct $r-$subset of the $\Lambda$ caches. However, unlike \cite{PD},\cite{BE}, each user also has a private cache. This network is a generalization of the multi-access combinatorial \cite{PD} and dedicated \cite{MAN} caching networks. We refer to this network as the combinatorial multi-access plus private (CMAP) coded caching setting. This setting is akin to cache-enabled users (like cell phones) connecting to several access points in an environment. To the best of our knowledge, this is the first work that studies coded caching for this system.
	
	\subsection{Our Contributions}
	We introduce a novel combinatorial network architecture incorporating access and private caches. This network can be viewed as the generalization of the dedicated caching network, introduced in \cite{MAN}, and the combinatorial multi-access caching network, introduced in \cite{PD}. Our contributions are presented below:
	\begin{itemize}
		\item The optimal rate for the CMAP system, under uncoded placement, is lower and upper bounded using the rates of the schemes presented in \cite{MAN} and \cite{PD}.
		\item The optimal rate for the CMAP system, under any general placement, is lower bounded using cut-set arguments\cite{CT}.
		\item A centralized coded caching scheme is proposed for the CMAP setting when the private cache memory takes a particular value. It is shown that the proposed scheme reverts to the combinatorial multi-access coded caching scheme in \cite{PD} when the private cache memories at the users are of size zero.
		\item A lower bound on the number of transmissions made during the delivery phase for the placement policy presented in this paper is derived using index coding arguments.
		\item Numerical  plots are given to compare the rate attained by the achievability scheme with the upper and lower bounds proposed in this paper. 
		\item A placement policy applicable for any general value of the private cache memory size and a discussion on the optimality for the CMAP coded caching system having four access caches are also presented.
	\end{itemize}
	
	\textit{Organization of the paper}: Section \ref{systemmodel} introduces the system model and the preliminaries needed for proofs later in the paper. The main results of this paper are presented in section \ref{mainresults}, the proofs of which are provided in the subsequent section \ref{proofs}. In section \ref{numericalcomparison}, we provide numerical plots for comparison of the rate attained by the achievability scheme with the bounds derived in section \ref{mainresults}. A general placement policy followed by a discussion on the optimality for the CMAP coded caching system having four access caches is provided in section \ref{optimalityforlambda4}. Finally, we conclude the paper in section \ref{conclusions}. 
	
	\textit{Notation:} The set $\{a,a+1,\cdots,b\}$ where $b\geq a$ is denoted by $[a,b]$, for some $a,b\in\mathbb{Z}^+$, where $\mathbb{Z}^+$ is the set of all non-negative integers. The cardinality of a set $A$ is denoted as $|A|$. The finite field with $q$ elements is denoted as $\mathbb{F}_q$. The smallest integer not less than $a$ is denoted by $\lceil a\rceil$, while $\lfloor a\rfloor$ denotes the largest integer not greater than $a$. The binomial coefficient $\frac{n!}{k!(n-k)!}$ is denoted as $\binom{n}{k}$ and we assume $\binom{n}{k}=0$ if $n < 0, k < 0$ or $n<k$. We use the $\oplus$ symbol to denote the bitwise XOR operation.
	\section{System Model and Preliminaries}
	\label{systemmodel}
	In this section, we first introduce the system model. After that, we discuss the MAN scheme for dedicated and combinatorial multi-access coded caching systems and revisit some results from index coding that are used in this work.
	\subsection{System Model}
	\begin{figure}
		\centering
		\includegraphics[scale=0.7]{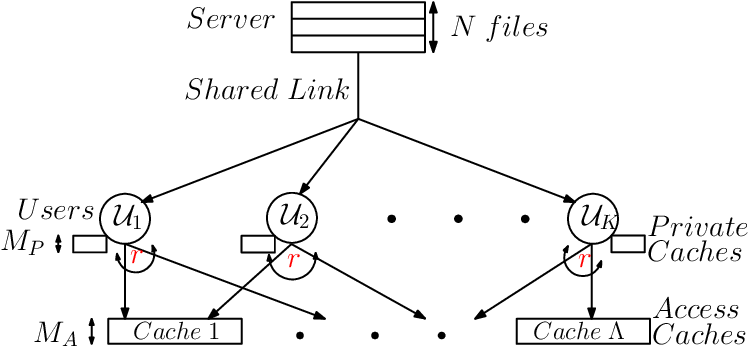}
		\caption{The $(\Lambda,r,M_a,M_p,N)-$CMAP Coded Caching System.}
		\label{fig1}
	\end{figure}
	Consider the system model as shown in Fig. \ref{fig1}. The central server has $N$ files of $B$ bits each, denoted by $W_1, W_2,\cdots, W_N$. The server connects to $K$ users via an error-free wireless broadcast link such that $N\geq K$. The system has $\Lambda$ caches, each capable of storing $M_a\leq N$ files, that are accessed by multiple users via error-free infinite-capacity wireless links. We refer to these caches as the access caches. 
	Every distinct $r-$subset of the $\Lambda$ caches is accessed by a single user, where $r$ is the access degree. Users are indexed by the subset of access caches they connect to, resulting in a total of $K=\binom{\Lambda}{r}$ users. Each user has a private cache capable of storing $M_p\leq N$ files. The memory pairs $(M_a, M_p)$ that are of interest satisfy the constraint $M_a+M_p < N$. This model is referred to as the $(\Lambda,r, M_a, M_p, N)-$CMAP coded caching setting. The system operates in two phases:
	\begin{enumerate}
		\item Placement phase: The server populates the private and the access caches with parts of the files in either coded or uncoded fashion while adhering to their respective memory constraints. The server employs a caching mechanism wherein files are divided into subfiles. In this paper, these subfiles are further broken down into mini-subfiles, which are then stored in the private cache of the users. Meanwhile, the access caches are populated directly with the subfiles. The number of mini-subfiles each file is divided into is called the subpacketization level. The content of the access cache $a$ is denoted by $Z_a$ while the contents of the private cache of the user $\mathcal{U}$ is denoted by $Z^p_\mathcal{U}$. For a particular subfile that a user accesses from an access cache, we assume that all the corresponding mini-subfiles are available to the user. The set of mini-subfiles that a user $\mathcal{U}$ has, from the access caches it connects to as well as from its private cache is denoted as $\mathcal{Z}_\mathcal{U}$ . 
		\item Delivery phase: Each user $\mathcal{U}$ demands one of the $N$ files from the server. The demands of all the users are encapsulated in the demand vector $\mathbf{d} = (d_\mathcal{U}:\mathcal{U}\subseteq[1,\Lambda],|\mathcal{U}|=r)$. After the demand vector is known, the server aims to satisfy the demands of all the users with the minimum number of transmissions. Each transmission consists of a coded combination of mini-subfiles, achieved through bitwise XOR operations. As a result, each transmission contains the same number of bits as there are in one mini-subfile. The rate $R$ is defined as the number of transmissions made by the server in the unit of files. Since each transmission is of the size of one mini-subfile, the rate $R$ is defined as the number of transmissions made by the server normalized by the subpacketization. The maximum number of transmissions occurs when each user demands a different file. This results in the worst-case rate.
		\begin{defn}
			\label{defn1}
			Consider the $(\Lambda,r,M_a,M_p,N)-$CMAP coded caching setting. We say that the triplet $(M_a, M_p, R)$ is achievable if there exists a coded caching scheme that achieves the rate $R$ with the memory pair $(M_a, M_p)$ for a large enough file size. We define the optimal worst-case rate for the $(\Lambda,r,M_a,M_p,N)-$CMAP coded caching setting as			
			\begin{align*}
				R^{\textasteriskcentered}(M_a,M_p)=\inf\{R:(M_a,M_p,R)\text{ is achievable}\}.
			\end{align*}
		\end{defn}
		The objective is to design joint placement and delivery policies such that $R^{\textasteriskcentered}(M_a, M_p)$ is achieved.
	\end{enumerate}
	\subsection{MAN Scheme}
	The MAN scheme \cite{MAN} is defined for the dedicated caching network where $K$ users connect to a central server having $N$ files. Every user connects to a dedicated cache capable of storing $M\leq N$ files. The delivery phase starts when the server is informed of the demand vector $\mathbf{d}=(d_1,d_2,\cdots,d_K)$, where $d_k$ is the index of the file demanded by the $k^{\text{th}}$ user.
	\begin{enumerate}
		\item Placement Phase: Each file is divided into $\binom{K}{t}$ subfiles as $W_n=\{W_{n,\mathcal{T}}:\mathcal{T}\subseteq[1,K],|\mathcal{T}|=t\}$, where $t=\frac{KM}{N}\in\mathbb{Z}^+$. The contents of the cache connected to user $k\in[1,K]$ is $Z_k=\{W_{n,\mathcal{T}}:i\in\mathcal{T},\mathcal{T}\subseteq[1,K],|\mathcal{T}|=t, \forall n\in[1,N]\}$.	
		\item Delivery Phase: The server makes the broadcast transmission $T_{\mathcal{S}}$ for every subset $\mathcal{S}$ of $[1,K]$, where $|\mathcal{S}|=t+1$ and $T_{\mathcal{S}}=\bigoplus\limits_{s\in \mathcal{S}} W_{{d_s},\mathcal{S}\setminus\{s\}}$.
		\item Rate: Each file is divided into $\binom{K}{t}$ subfiles, and a transmission is made for every $(t+1)$ subset of the users; we have the rate, shown optimal under uncoded placement for $N\geq K$ in \cite{YMA}, as $R^{\textasteriskcentered}_D(M)=\frac{\binom{K}{t+1}}{\binom{K}{t}}$.
	\end{enumerate}
	\subsection{MAN Scheme for Combinatorial Multi-Access Coded Caching (CMACC) Network}
	Consider a combinatorial multi-access setting with $N$ files, $\Lambda$ caches, each of memory $M\leq N$ files, and $K=\binom{\Lambda}{r}$ users, each accessing a distinct $r-$subset of the $\Lambda$ caches. The contents of the cache $\lambda\in[1,\Lambda],$ are given by $Z_\lambda=\{W_{n,\mathcal{T}}:\lambda\in\mathcal{T},\mathcal{T}\subseteq[1,\Lambda],|\mathcal{T}|=t, \forall n\in[1,N]\}$, where $t=\frac{\Lambda M}{N}\in\mathbb{Z}^+$. Once a demand vector $\mathbf{d}=(d_{\mathcal{U}}:\mathcal{U}\subseteq[1,\Lambda],|\mathcal{U}|=r)$ is revealed, the server transmits $T_{\mathcal{S}}=\bigoplus\limits_{\mathcal{U}\subseteq \mathcal{S}, |\mathcal{U}|=r} W_{{d_{\mathcal{U}},\mathcal{S}\setminus \mathcal{U}}},\forall\mathcal{S}\subseteq[1,\Lambda],|\mathcal{S}|=(t+r)$. This scheme\cite{PD}, shown to be optimal under uncoded placement for $N\geq K$ in \cite{BE}, results in a rate $R^{\textasteriskcentered}_{CMACC}=\frac{\binom{\Lambda}{t+r}}{\binom{\Lambda}{t}}$.
	
	\subsection{Index Coding Preliminaries}
	The index coding problem (ICP) with side information\cite{BK},\cite{YBJK} involves a single source having $n$ messages $x_1,x_2,\cdots,x_n: x_i\in \mathbb{F}_q,\forall i\in[1,n],$ broadcasting to a set of $K$ receivers, $R_1, R_2,\cdots, R_K$. A receiver $R_i$, $i \in [1, K]$, possesses $\{x_j:j\in\mathcal{X}_i\}$, where $\mathcal{X}_i \subseteq [1,n]$ is the index set of messages belonging to the side information of receiver $R_i$. Further, each receiver $R_i$ is interested in receiving a message $x_{f(i)}$, where $f:[1,K]\rightarrow[1,n],$ and $f(i)\not\in\mathcal{X}_i$. For an ICP $\mathcal{I}$, the generalized independence number $\alpha(\mathcal{I})$ was defined in\cite{DSC} as follows: Define the set $\mathcal{Y}_i=[1,n]\setminus\left(\{f(i)\}\cup\mathcal{X}_i\right)$ for each receiver $R_i$. Define $\mathcal{J}(\mathcal{I})=\bigcup\limits_{i\in[1,K]} \{\{f(i)\}\cup Y_i:Y_i\subseteq\mathcal{Y}_i\}$. A subset $H$ of $[1,n]$ is called a generalized independent set in $\mathcal{I}$ if every subset of $H$ belongs in $\mathcal{J}(\mathcal{I})$. The generalized independent set having the largest cardinality in $\mathcal{I}$ is called the maximal generalized independent set, and its cardinality, denoted by $\alpha(\mathcal{I})$, is called the generalized independence number. It was shown in \cite{KTR} that $\alpha(\mathcal{I})$ lower bounds the number of scalar linear transmissions required to solve the ICP $\mathcal{I}$. For a given placement scheme and a given demand vector, the delivery phase of the coded caching problem can be formulated as an ICP; hence, its corresponding generalized independence number lower bounds the number of transmissions in the delivery phase required to satisfy the demands of all the users in the coded caching problem.
	\section{Main Results}
	\label{mainresults}
	In this section, we present the main results in this paper. Proposition \ref{prop1} gives lower and upper bounds on the optimal rate, under uncoded placement, for the CMAP network described in Section \ref{systemmodel}. For the same setting, Theorem \ref{thm1} presents a lower bound on the optimal worst-case rate described in Definition \ref{defn1} and Theorem \ref{thm2} presents an achievable rate. When the cache placement is done according to the placement policy proposed in this paper, which is presented later in section \ref{achievability}, Theorem \ref{thm3} provides a lower bound on the number of transmissions required in the delivery phase.
	\begin{prop}
		\label{prop1}
		For a $(\Lambda,r, M_a, M_p, N)-$CMAP coded caching system, the optimal worst-case rate $R_{UC}^{\textasteriskcentered}(M_a, M_p)$ under uncoded placement  is bounded as:
		\begin{equation*}
			R^{\textasteriskcentered}_{D}(rM_a+M_p)\leq R_{UC}^{\textasteriskcentered}(M_a,M_p)\leq R^{\textasteriskcentered}_{CMACC}(M_a+\frac{M_p}{r}),
		\end{equation*}
		where, $R^{\textasteriskcentered}_{D}(M)$ is the rate achieved by MAN scheme \cite{MAN} and $R^{\textasteriskcentered}_{CMACC}(M)$ is the rate achieved by MAN scheme for CMACC network\cite{PD}.
	\end{prop}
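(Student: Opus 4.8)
The plan is to prove the two inequalities separately, each by an explicit reduction between the CMAP system and one of the two reference systems whose optimal uncoded rates, $R^{\textasteriskcentered}_{D}$ and $R^{\textasteriskcentered}_{CMACC}$, are already characterized.

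For the lower bound $R^{\textasteriskcentered}_{D}(rM_a+M_p)\le R^{\textasteriskcentered}_{UC}(M_a,M_p)$, I would argue that the CMAP system is a restriction of a dedicated caching system in which every user owns a dedicated cache of size $rM_a+M_p$. Concretely, take any uncoded CMAP scheme of rate $R$ and observe that the total content available to user $\mathcal{U}$, namely $\mathcal{Z}_\mathcal{U}=\bigcup_{\lambda\in\mathcal{U}}Z_\lambda\cup Z^p_\mathcal{U}$, occupies at most $rM_a+M_p$ files. Placing $\mathcal{Z}_\mathcal{U}$ (padded arbitrarily to exactly $rM_a+M_p$ files) in the dedicated cache of user $\mathcal{U}$ yields an uncoded placement for a dedicated system with $K=\binom{\Lambda}{r}$ users. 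Since each user's side information is unchanged, the same transmissions decode every demand, so rate $R$ is achievable there. As $N\ge K$, the MAN scheme is optimal under uncoded placement, hence $R^{\textasteriskcentered}_{D}(rM_a+M_p)\le R$; taking the infimum over all CMAP schemes gives the claim.

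For the upper bound, the key idea is that the private cache of size $M_p$ can be used to enlarge each of the $r$ access caches a user sees by $M_p/r$, thereby emulating a CMACC system with per-cache memory $M'=M_a+\tfrac{M_p}{r}$ (note $M'\le M_a+M_p<N$, so this is feasible). I would instantiate the CMACC placement of \cite{PD} for memory $M'$, whose cache content $Z_\lambda$ occupies $M'$ files, and split each $Z_\lambda$ into a part $Z_\lambda^{a}$ of size $M_a$ and a part $Z_\lambda^{p}$ of size $M_p/r$. Storing $Z_\lambda^{a}$ in the physical access cache respects the $M_a$ constraint, while storing $\bigcup_{\lambda\in\mathcal{U}}Z_\lambda^{p}$ in the private cache of user $\mathcal{U}$ uses at most $\sum_{\lambda\in\mathcal{U}}M_p/r=M_p$ files (overlaps among the $Z_\lambda^{p}$ only reduce this), respecting the $M_p$ constraint. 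After placement, user $\mathcal{U}$ reconstructs exactly $\bigcup_{\lambda\in\mathcal{U}}Z_\lambda$, which is precisely its side information in the CMACC system; running the CMACC delivery verbatim then meets all demands at rate $R^{\textasteriskcentered}_{CMACC}(M_a+\tfrac{M_p}{r})$, establishing $R^{\textasteriskcentered}_{UC}(M_a,M_p)\le R^{\textasteriskcentered}_{CMACC}(M_a+\tfrac{M_p}{r})$.

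The routine parts are the two cache-feasibility checks and the observation that delivery transmissions carry over unchanged; the main obstacle is making the memory split in the upper bound rigorous, in particular handling the integer subpacketization constraint $t=\tfrac{\Lambda M'}{N}\in\mathbb{Z}^+$. For general $(M_a,M_p)$ this is resolved by the usual memory-sharing argument, taking a convex combination of the split schemes at the two neighbouring integer points of $t$, so that the bound holds as the lower convex envelope of the CMACC rate curve evaluated at $M_a+\tfrac{M_p}{r}$.
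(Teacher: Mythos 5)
Your proof is correct and takes essentially the same approach as the paper: the lower bound by transplanting each CMAP user's accessible content $\mathcal{Z}_\mathcal{U}$ into a dedicated cache of size $rM_a+M_p$ and reusing the delivery policy, and the upper bound by splitting the CMACC placement at per-cache memory $M_a+\frac{M_p}{r}$ into an access part of size $M_a$ and private parts of size $\frac{M_p}{r}$ aggregated at each user. Your extra care about padding, overlaps among the $Z_\lambda^{p}$, and memory sharing at non-integer $t$ only makes explicit what the paper leaves implicit.
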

	\begin{proof}
		Consider a $(\Lambda,r,M_a,M_p,N)-$CMAP coded caching system. Each user in this system connects to $r$ access caches, each of which is capable of storing $M_a$ files. In addition to this, the user also has a private cache of storage capacity $M_p$ files. Hence, the total memory accessed by each user is $rM_a+M_p$. For a fair comparison, the total memory accessed by each user is kept the same in all the three settings under consideration, namely the CMAP coded caching setting, the combinatorial multi-access network, and the dedicated caching network. We will calculate the size of the caches in the combinatorial multi-access network first, followed by the calculation of the size of the cache memories in the dedicated caching network. In the multi-access coded caching network, each user connects to $r$ caches. If every cache is of size $M_{CMACC}$, the total memory accessed by the user will be $rM_{CMACC}$. Since the total memory accessed by a user is $rM_a+M_p$, we have $M_{CMACC}=M_a+\frac{M_p}{r}$. In the dedicated caching network, each user connects to a cache of size $M_D$ which implies $M_D=rM_a+M_p$. We will now prove the inequality given above.
		
		Let $Z^{\textasteriskcentered}$ and $D^{\textasteriskcentered}$ be the placement and delivery policy that results in $R_{UC}^{\textasteriskcentered}(M_a,M_p)$. Here, the contents accessible to each user $\mathcal{U}$ can be written as $\mathcal{Z}_{\mathcal{U}}=\Big\{\{\bigcup\limits_{i\in\mathcal{U}}Z_i\}\cup Z^p_{\mathcal{U}}\Big\}$. In a dedicated cache network with $K=\binom{\Lambda}{r}$ users, each having a cache of size $rM_a+M_p$ files, it is possible to follow a placement such that the contents available to user $k\in[1, K]$ is the same as $\Big\{\{\bigcup\limits_{i\in\mathcal{U}}Z_i\}\cup Z^p_{\mathcal{U}}\Big\}$ where $\mathcal{U}$ is the $k^{\text{th}}$ user, when the user-index sets are arranged lexicographically. Thus, we can conclude that by following the delivery policy $D^{\textasteriskcentered}$ in the dedicated caching network, we achieve a rate of $R_{UC}^{\textasteriskcentered}(M_a, M_p)$. Hence, we have $R^{\textasteriskcentered}_{D}(rM_a+M_p)\leq R_{UC}^{\textasteriskcentered}(M_a,M_p)$.
		
		Consider a combinatorial multi-access coded caching network with $\Lambda$ caches, each of memory $M=M_a+\frac{M_p}{r}$ files, achieving the rate $R^{\textasteriskcentered}_{CMACC}(M)$. The contents of a cache $i\in[1,\Lambda]$ can be written as $Z_i=Z_{a_i}\cup Z_{p_i}$, where $|Z_{a_i}|=M_a,\text{ and, }|Z_{p_i}|=\frac{M_p}{r}$. For the CMAP setting, if we populate the $i^{\text{th}}$ access cache as $Z_i=Z_{a_i},i\in[1,\Lambda],$ and the content of the private cache of user $\mathcal{U}$ as $Z^p_{\mathcal{U}}=\bigcup\limits_{i\in\mathcal{U}} Z_{p_i}$, following the delivery policy of \cite{PD}, we obtain a rate of $R^{\textasteriskcentered}_{CMACC}(M_a+\frac{M_p}{r})$. Hence, $R_{UC}^{\textasteriskcentered}(M_a,M_p)\leq R^{\textasteriskcentered}_{CMACC}(M_a+\frac{M_p}{r})$.	
	\end{proof}We have characterized the bounds on the optimal worst-case rate under uncoded placement for the CMAP coded caching system. Now, we provide a lower bound on the optimal worst-case rate under any general placement for the CMAP coded caching system.
	\begin{thm}
		\label{thm1}
		For a $(\Lambda,r,M_a,M_p,N)-$CMAP coded caching system, the worst-case rate is lower bounded as
		\begin{align}
			R^{\textasteriskcentered}(M_a,M_p)\geq\max\limits_{s\in\{1,2,\cdots,\min(K,N)\}} \left(s-\frac{qM_a+sM_p}{\left\lfloor\frac{N}{s}\right\rfloor}\right),
		\end{align}where $q=\min(\Lambda+r-1,\Lambda)$.
		\begin{proof}
			For $s\in\{1,2,\cdots,\min(N, K)\}$, consider the first $s$ users, given that the user-index sets are arranged in a lexicographic manner. These $s$ users will connect to the first $q=\min(s+r-1,\Lambda)$ access caches. For the demand vector where the first $s$ users request the files $W_1, W_2,\cdots, W_s$, respectively, and the remaining $K-s$ users demand arbitrary files, let the server make the transmission $T_1$. The first $s$ users decode the files $W_1, W_2,\cdots, W_s$ using $T_1$, along with the cache contents of the first $q$ access caches and their private caches. Similarly, for the demand vector where the first $s$ users request the files $W_{s+1}, W_{s+2},\cdots, W_{2s}$, and the remaining $K-s$ users make arbitrary demands, let the server make the transmission $T_2$. Using the transmission $T_2$, the contents of the first $q$ access caches and the contents of their respective private caches, the first $s$ users are able to decode the files $W_{s+1}, W_{s+2}\cdots, W_{2s}$. Continuing in this manner, the first $s$ users will be able to decode the files $W_{(\left\lfloor\frac{N}{s}\right\rfloor-1)s+1}, W_{(\left\lfloor\frac{N}{s}\right\rfloor-1)s+2},\cdots, W_{\left\lfloor\frac{N}{s}\right\rfloor s}$ using the contents of the first $q$ access caches, the contents of their respective private caches and the transmission $T_{\left\lfloor\frac{N}{s}\right\rfloor}$. The server has transmitted $\left\lfloor\frac{N}{s}\right\rfloor R^{\textasteriskcentered}(M_a, M_p)B$ bits, the first $s$ users have access to $qM_aB+sM_pB$ bits, and, using these transmissions and the cache contents, the first $s$ users have been able to decode $s\left\lfloor\frac{N}{s}\right\rfloor B$ bits. Therefore, we have,
			\begin{align*}
				&\left\lfloor\frac{N}{s}\right\rfloor R^{\textasteriskcentered}(M_a,M_p)B+qM_aB+sM_pB\geq s\left\lfloor\frac{N}{s}\right\rfloor B,\\
				&\implies R^{\textasteriskcentered}(M_a,M_p)\geq s-\frac{qM_a+sM_p}{\left\lfloor\frac{N}{s}\right\rfloor}.
			\end{align*}Maximizing over all $s\in\{1,2,\cdots,\min(N,K)\}$, we have, 
			\begin{align*}
				R^{\textasteriskcentered}(M_a,M_p)\geq \max\limits_{s\in\{1,2,\cdots,\min(N,K)\}}\left(s-\frac{qM_a+sM_p}{\left\lfloor\frac{N}{s}\right\rfloor}\right).
			\end{align*}
		\end{proof}
	\end{thm}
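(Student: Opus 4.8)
The plan is to prove this via a cut-set style counting (genie) argument, exploiting the fact that each cache is populated once in the placement phase and then reused across arbitrarily many demand rounds, while the only round-dependent information the users receive is the server's broadcast. The strategy isolates a small group of users, upper bounds all the side information reachable by that group, and then forces the group to decode many more file-bits than it caches, the deficit being paid for by the transmissions.

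First I would fix an integer $s\in\{1,2,\ldots,\min(N,K)\}$, list all $\binom{\Lambda}{r}$ user-index sets lexicographically, and single out the first $s$ users. The step that is genuinely specific to the combinatorial connectivity is to bound the number $q$ of distinct access caches these $s$ users collectively touch. Tracking the lexicographic enumeration, the first user occupies caches $\{1,\ldots,r\}$, and each subsequent user in this order introduces at most one new cache index, so the union of their $r$-subsets has cardinality at most $\min(s+r-1,\Lambda)$; this is the quantity $q$ in the bound. Consequently the total side information available to the $s$ users is at most $qM_aB$ bits from the access caches together with $sM_pB$ bits from their $s$ private caches.

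Next I would build $\lfloor N/s\rfloor$ demand vectors $\mathbf{d}_1,\ldots,\mathbf{d}_{\lfloor N/s\rfloor}$ in which, during round $j$, the first $s$ users request the fresh files $W_{(j-1)s+1},\ldots,W_{js}$ (the remaining $K-s$ users demand arbitrarily), so that across all rounds these users collectively demand $s\lfloor N/s\rfloor$ distinct files. Because the cache contents are identical in every round, the fixed caches together with the concatenation of the $\lfloor N/s\rfloor$ transmissions must determine all $s\lfloor N/s\rfloor$ files. Phrasing this as an entropy inequality, with $Z$ the cache contents of the $s$ users and $X_j$ the round-$j$ transmission,
\begin{align*}
s\left\lfloor\tfrac{N}{s}\right\rfloor B=H\big(W_1,\ldots,W_{s\lfloor N/s\rfloor}\big)\le H(Z)+\sum_{j}H(X_j)\le (qM_a+sM_p)B+\left\lfloor\tfrac{N}{s}\right\rfloor R^{\textasteriskcentered}(M_a,M_p)B.
\end{align*}
Dividing by $B\lfloor N/s\rfloor$ and rearranging gives $R^{\textasteriskcentered}(M_a,M_p)\ge s-\frac{qM_a+sM_p}{\lfloor N/s\rfloor}$, and maximizing over all admissible $s$ yields the claim.

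I expect the main obstacle to be the rigorous justification of the two "$\le$" steps above rather than the final algebra. The decodability-to-counting passage must be argued carefully: one conditions on $Z$ and all $X_j$, observes that each demanded file is a deterministic function of $Z$ and the transmission of the round in which it is requested, and thereby drives the conditional entropy of the whole file block to zero; the subadditivity bound $H(X_j)\le R^{\textasteriskcentered}B$ then follows from the definition of the rate as transmitted bits normalized by file size. The other subtle point is confirming that the lexicographic choice of users genuinely controls $q$, since a weaker count would only loosen the bound; verifying the "one new cache per additional user" behaviour of the lexicographic order is what certifies $q=\min(s+r-1,\Lambda)$. Everything else is routine.
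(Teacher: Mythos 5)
Your proposal is correct and follows essentially the same cut-set argument as the paper: isolate the first $s$ lexicographic users, observe they touch only $q=\min(s+r-1,\Lambda)$ access caches, feed them $\left\lfloor N/s\right\rfloor$ rounds of fresh demands, and count bits (your entropy phrasing is just a more formal rendering of the paper's counting step). Incidentally, your $q=\min(s+r-1,\Lambda)$ matches the paper's proof; the $\min(\Lambda+r-1,\Lambda)$ appearing in the theorem statement is evidently a typo for this quantity.
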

	We now present the achievability scheme.
	\begin{thm}[Achievability]
		\label{thm2}	~\\For a $(\Lambda,r,M_a,M_p=\frac{N}{\binom{\Lambda}{r}},N)-$CMAP coded caching setting, a worst-case rate
		\begin{equation}
			R=\frac{\binom{\Lambda-r-t}{r}}{\binom{t+r}{t}}+\sum\limits_{i=1}^{r-1}\frac{\binom{r}{i}\binom{\Lambda-t-r}{r-i}}{2\binom{t+i}{i}},
		\end{equation}is achievable for the subpacketization $F=\binom{\Lambda}{t}\binom{\Lambda-t}{r}$, where $t=\frac{\Lambda M_a}{N}$ and $t\in[0,\Lambda]$.
	\end{thm}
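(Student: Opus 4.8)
The plan is to exhibit an explicit placement together with a two-part delivery scheme and to verify that the stated subpacketization and rate are met for an arbitrary demand vector. \emph{Placement.} I would index the mini-subfiles of each file $W_n$ by an ordered pair of disjoint sets, writing $W_n=\{W_{n,\mathcal{T},\mathcal{R}}:\mathcal{T},\mathcal{R}\subseteq[1,\Lambda],\,|\mathcal{T}|=t,\,|\mathcal{R}|=r,\,\mathcal{T}\cap\mathcal{R}=\emptyset\}$, so that each file splits into $\binom{\Lambda}{t}\binom{\Lambda-t}{r}=F$ mini-subfiles. The access cache $\lambda$ would store every $W_{n,\mathcal{T},\mathcal{R}}$ with $\lambda\in\mathcal{T}$ (mirroring the CMACC placement), while the private cache of user $\mathcal{U}$ stores $Z^p_{\mathcal{U}}=\{W_{n,\mathcal{T},\mathcal{U}}:\mathcal{T}\cap\mathcal{U}=\emptyset,\,\forall n\}$. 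A direct count gives an access-cache fraction $t/\Lambda$, consistent with $t=\Lambda M_a/N$, and, using the identity $\binom{\Lambda}{t}\binom{\Lambda-t}{r}=\binom{\Lambda}{r}\binom{\Lambda-r}{t}$, a private-cache fraction $1/\binom{\Lambda}{r}$, matching $M_p=N/\binom{\Lambda}{r}$.

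First I would check that, after accessing its $r$ caches and its private cache, user $\mathcal{U}$ is missing exactly the mini-subfiles $W_{d_{\mathcal{U}},\mathcal{T},\mathcal{R}}$ with $\mathcal{T}\cap\mathcal{U}=\emptyset$ and $\mathcal{R}\neq\mathcal{U}$ (the first condition sends everything else to an access cache, and $\mathcal{R}=\mathcal{U}$ would place it in the private cache). I would then partition these missing mini-subfiles according to $i:=|\mathcal{U}\cap\mathcal{R}|\in\{0,1,\ldots,r-1\}$ and design a delivery for each value of $i$.

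\emph{Delivery for $i=0$.} When $\mathcal{U},\mathcal{T},\mathcal{R}$ are pairwise disjoint, I would fix the private index $\mathcal{R}$ and run the CMACC delivery of \cite{PD} on the ground set $[1,\Lambda]\setminus\mathcal{R}$: for each $\mathcal{S}\subseteq[1,\Lambda]\setminus\mathcal{R}$ with $|\mathcal{S}|=t+r$, transmit $\bigoplus_{\mathcal{U}\subseteq\mathcal{S},|\mathcal{U}|=r}W_{d_{\mathcal{U}},\mathcal{S}\setminus\mathcal{U},\mathcal{R}}$. The standard CMACC argument then shows each participating user recovers its term, since every other term carries a subfile index meeting $\mathcal{U}$ and thus lies in $\mathcal{U}$'s access caches. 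Counting $\binom{\Lambda}{r}\binom{\Lambda-r}{t+r}$ such transmissions and dividing by $F$ yields, after a binomial simplification, the first summand $\binom{\Lambda-r-t}{r}/\binom{t+r}{t}$.

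\emph{Delivery for $i\geq1$.} This is where the factor $\tfrac12$ and the denominator $\binom{t+i}{i}$ arise, and I expect it to be the crux of the proof. For disjoint sets $\mathcal{A},\mathcal{B}$ of size $r-i$ and a set $\mathcal{W}$ of size $t+i$ disjoint from $\mathcal{A}\cup\mathcal{B}$, I would send the single coded symbol $\bigoplus_{\mathcal{I}\subseteq\mathcal{W},|\mathcal{I}|=i}\bigl(W_{d_{\mathcal{I}\cup\mathcal{A}},\,\mathcal{W}\setminus\mathcal{I},\,\mathcal{I}\cup\mathcal{B}}\oplus W_{d_{\mathcal{I}\cup\mathcal{B}},\,\mathcal{W}\setminus\mathcal{I},\,\mathcal{I}\cup\mathcal{A}}\bigr)$. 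The key verification is that user $\mathcal{I}_0\cup\mathcal{A}$ decodes its target $W_{d_{\mathcal{I}_0\cup\mathcal{A}},\mathcal{W}\setminus\mathcal{I}_0,\mathcal{I}_0\cup\mathcal{B}}$: for every $\mathcal{I}\neq\mathcal{I}_0$ the index $\mathcal{W}\setminus\mathcal{I}$ meets $\mathcal{I}_0$ and hence both terms lie in this user's access caches, while the single surviving reciprocal term $W_{d_{\mathcal{I}_0\cup\mathcal{B}},\mathcal{W}\setminus\mathcal{I}_0,\mathcal{I}_0\cup\mathcal{A}}$ has private index $\mathcal{I}_0\cup\mathcal{A}$ equal to the user itself and so sits in its private cache; a symmetric statement holds for $\mathcal{I}_0\cup\mathcal{B}$. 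Thus one transmission resolves $2\binom{t+i}{i}$ demands. Treating $\{\mathcal{A},\mathcal{B}\}$ as unordered gives $\tfrac12\binom{\Lambda}{t+i}\binom{\Lambda-t-i}{r-i}\binom{\Lambda-t-r}{r-i}$ transmissions for each $i$; dividing by $F$ and simplifying with $\binom{r}{i}/\binom{t+i}{i}=t!\,r!/((r-i)!\,(t+i)!)$ reproduces the $i$-th summand $\binom{r}{i}\binom{\Lambda-t-r}{r-i}/(2\binom{t+i}{i})$. Summing the $i=0$ contribution and those for $i=1,\ldots,r-1$ gives exactly $R$, and since the construction and all transmission counts are independent of the actual demands, the rate is attained in the worst case. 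The main obstacle is precisely this decodability check for $i\geq1$: confirming that each of the $2\binom{t+i}{i}$ terms other than a user's target is cached either in an access cache (a wrong split $\mathcal{I}$) or in the private cache (the reciprocal term).
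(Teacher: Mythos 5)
Your proposal is correct and is essentially the paper's own scheme in closed form: the placement is identical, your $i=0$ transmissions (CMACC delivery on $[1,\Lambda]\setminus\mathcal{R}$) are exactly the XORs generated by Algorithm~1 via $swap_{no}$, and your $(\mathcal{W},\{\mathcal{A},\mathcal{B}\})$-indexed transmissions for $i\geq 1$ coincide, upon setting $\mathcal{W}=\mathcal{S}\cup(\mathcal{U}\cap\mathcal{U}^\prime)$, with the paper's $flip\big(W_{d_{\mathcal{U}},\mathcal{S},\mathcal{U}^\prime}\oplus\bigoplus_i swap_o(\cdot,i)\big)$ transmissions, with the same decodability argument (strip via access caches, cancel the reciprocal term via the private cache) and the same coding gains $\binom{t+r}{r}$ and $2\binom{t+i}{i}$. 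The only difference is presentational: you enumerate the transmissions directly rather than generating them greedily through the $flip$/$swap$ operators, which makes the transmission counts slightly more transparent but does not change the argument.
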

	\begin{proof}
		Section \ref{achievability} gives a scheme achieving this rate.
	\end{proof}
		Note that the rate is defined only for integer values of $t$. For general $0 \leq t \leq \Lambda$, the lower convex envelope of the points in Theorem \ref{thm1} is achievable via memory sharing, as explained in section \ref{achievability}.
		\begin{thm}[Alpha Bound]
			\label{thm3}
			For a $(\Lambda,r, M_a, M_p=\frac{N}{K}, N)-$CMAP coded caching setting,  and the placement policy described in Section \ref{achievability}, the number of transmissions $T$ required to satisfy the demands of all the users is lower bounded as
			\begin{equation}
				\begin{split}
					T\geq\binom{\Lambda-t}{r}\binom{\Lambda-r+1}{t+1}-\binom{\Lambda-r+2}{t+2}+\\\frac{\left(\binom{\Lambda-t}{r}-\Lambda+r+t-2\right)\left(\binom{\Lambda-t}{r}-\Lambda+r+t-1\right)}{2}.
				\end{split}
			\end{equation}
		\end{thm}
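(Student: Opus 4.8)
The plan is to formulate the delivery phase, under the specific placement policy of Section \ref{achievability} and with $M_p=\frac{N}{K}$, as an index coding problem $\mathcal{I}$, and then invoke the result of \cite{KTR} that the generalized independence number $\alpha(\mathcal{I})$ lower bounds the number of scalar linear transmissions $T$. The whole proof reduces to exhibiting a sufficiently large \emph{generalized independent set} $H$ of mini-subfiles and showing that $|H|$ equals the claimed right-hand side. So the first step is to fix a worst-case demand vector (all users demanding distinct files) and write down, for each user $\mathcal{U}$, its side-information index set $\mathcal{X}_{\mathcal{U}}$, namely the collection of mini-subfiles present in $\mathcal{Z}_{\mathcal{U}}$ (those obtained from the $r$ access caches it connects to together with those in its private cache), and its desired index $f(\mathcal{U})$, corresponding to the wanted mini-subfiles of $W_{d_{\mathcal{U}}}$ that it does not already hold.

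Next I would construct the candidate independent set $H$ explicitly. Guided by the structure of the bound, I expect $H$ to split as the union of two families whose sizes correspond to the two summands: a ``first-order'' family of wanted mini-subfiles $W_{d_{\mathcal{U}},\mathcal{T},\cdot}$ indexed by sets $\mathcal{T}$ of size $t$ disjoint from $\mathcal{U}$ (contributing the $\binom{\Lambda-t}{r}\binom{\Lambda-r+1}{t+1}-\binom{\Lambda-r+2}{t+2}$ terms, which themselves arise from a careful count over which caches are ``missing''), plus a ``correction'' family whose cardinality is the triangular-number expression $\frac{1}{2}\bigl(\binom{\Lambda-t}{r}-\Lambda+r+t-2\bigr)\bigl(\binom{\Lambda-t}{r}-\Lambda+r+t-1\bigr)$. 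The key verification is that $H$ is genuinely generalized independent: by Definition, every subset $H'\subseteq H$ must lie in $\mathcal{J}(\mathcal{I})$, i.e.\ there must exist a receiver $\mathcal{U}$ with $H'\subseteq\{f(\mathcal{U})\}\cup\mathcal{Y}_{\mathcal{U}}$, where $\mathcal{Y}_{\mathcal{U}}=[1,n]\setminus(\{f(\mathcal{U})\}\cup\mathcal{X}_{\mathcal{U}})$. In practice this means showing that any two (hence any collection) of chosen mini-subfiles are simultaneously ``unknown and not-both-demanded'' from the viewpoint of some common user, which is where the disjointness conditions built into the construction of $H$ earn their keep.

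I would then carry out the counting. The first summand should come from counting pairs (user $\mathcal{U}$, subfile-index $\mathcal{T}$) of a prescribed type and then subtracting off the overcounting of mini-subfiles that two distinct users could both claim — this is exactly the mechanism that turns a product $\binom{\Lambda-t}{r}\binom{\Lambda-r+1}{t+1}$ into that product minus $\binom{\Lambda-r+2}{t+2}$, a standard inclusion/adjustment identity on nested binomials. The second summand is a $\binom{m}{2}$-type count, with $m=\binom{\Lambda-t}{r}-(\Lambda-r-t)-2$ playing the role of the number of remaining ``free'' indices after the first family has been fixed; recognizing the right $m$ and confirming $\binom{m+1}{2}=\frac{m(m+1)}{2}$ matches the displayed quadratic is the bookkeeping that finishes the size computation.

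The main obstacle, I anticipate, is not the arithmetic but the \emph{independence check for the correction family and its interaction with the first family}: one must verify that mini-subfiles drawn from both families can always be assigned to a single common receiver's $\{f(\mathcal{U})\}\cup\mathcal{Y}_{\mathcal{U}}$, and that no forbidden overlap (a mini-subfile that is side information, or a pair that are both demanded) sneaks in. This requires a precise description of how the private-cache mini-subfiles (created by the extra splitting of subfiles at the $M_p=\frac{N}{K}$ operating point) are distributed across users, so that I can certify which indices land in $\mathcal{Y}_{\mathcal{U}}$ versus $\mathcal{X}_{\mathcal{U}}$. Once the placement's mini-subfile indexing is pinned down, the independence property and the two counts follow, and $T\ge\alpha(\mathcal{I})=|H|$ yields the stated bound.
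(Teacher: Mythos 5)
Your overall strategy is exactly the paper's: formulate the delivery phase as an ICP, invoke $T\geq\alpha(\mathcal{I})$ from \cite{KTR}, exhibit a generalized independent set split into two families matching the two summands, and count via the Hockey-stick identity plus a triangular number. But the proposal stops where the proof actually begins: the independent set is never constructed, and the one concrete candidate you float for the first family --- the wanted mini-subfiles $W_{d_{\mathcal{U}},\mathcal{T},\mathcal{U}^\prime}$ over \emph{all} users $\mathcal{U}$ and all $\mathcal{T}$ with $|\mathcal{T}|=t$, $\mathcal{T}\cap\mathcal{U}=\emptyset$ --- is not generalized independent. Under a worst-case demand each mini-subfile has a unique demanding receiver, and any ``flip pair'' $\{W_{d_{\mathcal{U}},\mathcal{T},\mathcal{U}^\prime},\,W_{d_{\mathcal{U}^\prime},\mathcal{T},\mathcal{U}}\}$ (each message lodged in the other user's private cache --- precisely the pairs the delivery algorithm XORs together via the $flip$ operation) lies in no set of the form $\{f(i)\}\cup Y_i$: the only receivers demanding either message hold the other as side information, so this two-element subset is outside $\mathcal{J}(\mathcal{I})$. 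Hence some restriction on which (user, subfile, private-cache-index) triples enter the set is mandatory, and identifying it is the real content of the theorem, not ``bookkeeping.'' Relatedly, your ``any two, hence any collection'' shortcut is invalid in general --- generalized independence is not a pairwise-closed property --- and only becomes a pairwise check after one fixes a total order on the chosen messages and verifies that the receiver demanding the earliest message in any subset knows none of the later ones.

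The paper supplies exactly this missing structure, via a staircase construction. With users ordered lexicographically and $\mathcal{U}^{\mathcal{S}}_k$ denoting the $k$-th user wanting subfile $\mathcal{S}$, it takes $B_1(\mathbf{d})=\bigcup_{i=1}^{\Lambda-r-t+1}\bigcup_{k=i+1}^{\binom{\Lambda-t}{r}}\big\{W_{d_{\mathcal{U}_i},\mathcal{S},\mathcal{U}^{\mathcal{S}}_k}:\mathcal{S}\subseteq[r+i,\Lambda],|\mathcal{S}|=t\big\}$ --- the $i$-th user contributes only subfiles supported on $[r+i,\Lambda]$ (hence invisible to all earlier users' access caches) and only mini-subfile indices of strictly later users --- together with a second triangle $B_2(\mathbf{d})$ on the single last subfile $\mathcal{S}^\prime=[\Lambda-t+1,\Lambda]$, demanded by users $\mathcal{U}^{\mathcal{S}^\prime}_m$ with $m\geq\Lambda-r-t+2$ and mini-subfile index $k>m$. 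This nesting yields the one-line independence proof (the receiver of the smallest-index message in any subset has none of the others as side information), and it is also what produces the arithmetic you anticipated: $|B_1(\mathbf{d})|=\sum_{i=0}^{\Lambda-r-t}\binom{\Lambda-r-i}{t}\left(\binom{\Lambda-t}{r}-1-i\right)$, which Hockey-stick collapses to $\binom{\Lambda-t}{r}\binom{\Lambda-r+1}{t+1}-\binom{\Lambda-r+2}{t+2}$, while $|B_2(\mathbf{d})|=\sum_{m=\Lambda-r-t+2}^{\binom{\Lambda-t}{r}}\left(\binom{\Lambda-t}{r}-m\right)$ is the stated triangular number. Your guess for the parameter in the quadratic term was correct, but without the staircase defining $B_1$ and $B_2$ there is nothing for either count, or the independence check you rightly flagged as the main obstacle, to attach to.
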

		\begin{proof}
			The proof is provided in section \ref{alphabound}.
		\end{proof}
		\section{Achievability and Lower Bound}
		\label{proofs}
		In this section, we present the general placement and delivery scheme that achieves the rate in Theorem \ref{thm1} and provide an index coding based lower bound on the number of transmissions required in the delivery phase as described in Theorem \ref{thm2}. Note that both the results proved in this section are for $M_p=\frac{N}{K}$.
		\subsection{Achievability Scheme}
		Before presenting the general placement and delivery scheme, we give two examples that illustrate the main idea behind the achievability scheme. 
		\label{achievability}
		\begin{example}
			\label{example1}
			Consider a CMAP system with a central server having $N=6$ files and $\Lambda=4$ access caches, each capable of storing $M_a=1.5$ files. The access degree for this system is $r=2$. There are $K=\binom{\Lambda}{r}=\binom{4}{2}=6$ users, denoted as $\{1,2\},\{1,3\},\{1,4\},\{2,3\},\{2,4\}$, and, $\{3,4\}$. For this system, we have $t=\frac{4\times1.5}{6}=1$. Each file $W_{n}$ is split into $\binom{\Lambda}{t}=4$ subfiles of equal length as $W_n=\{W_{n,\{1\}},W_{n,\{2\}},W_{n,\{3\}},W_{n,\{4\}}\}, \forall n\in[1,6]$. The contents of the access caches are 
			\begin{align*}
				&Z_1=\{W_{n,\{1\}},\forall n\in[1,6]\},\\
				&Z_2=\{W_{n,\{2\}},\forall n\in[1,6]\},\\
				&Z_3=\{W_{n,\{3\}},\forall n\in[1,6]\},\text{ and},\\
				&Z_4=\{W_{n,\{4\}},\forall n\in[1,6]\}.
			\end{align*}%
			Each access cache stores $\frac{6}{4}=1.5$ files, satisfying its memory constraint. After users connect to access caches, the subfile $W_{n, S}$ is available to those users whose user-index sets have a non-empty intersection with $\mathcal{S}$, that is, $\{\mathcal{U}:\mathcal{S}\cap\mathcal{U} \neq \emptyset\}$. This means that the subfile $W_{n,\mathcal{S}}$ is not available to $\binom{\Lambda-t}{r}=\binom{3}{2}=3$ users. Thus, every subfile is split into $3$ mini-subfiles. The private caches of the users are populated with the mini-subfiles of the subfiles the users do not get when connecting to the access caches. The mini-subfile of the subfile $\mathcal{S}$ of file $n$ that is stored in the private cache of user $\mathcal{U}$ is denoted as $W_{n,\mathcal{S},\mathcal{U}}$. Since each file is divided into $4$ subfiles and each subfile is further divided into $3$ mini-subfiles, the total subpacketization is $F=12$. The contents of the private caches of the users are shown below:
			\begin{align*}
				&Z^p_{\{1,2\}}=\{W_{n,\{3\},\{1,2\}},W_{n,\{4\},\{1,2\}},\forall n\in[1,6]\},\\
				&Z^p_{\{1,3\}}=\{W_{n,\{2\},\{1,3\}},W_{n,\{4\},\{1,3\}},\forall n\in[1,6]\},\\
				&Z^p_{\{1,4\}}=\{W_{n,\{2\},\{1,4\}},W_{n,\{3\},\{1,4\}},\forall n\in[1,6]\},\\
				&Z^p_{\{2,3\}}=\{W_{n,\{1\},\{2,3\}},W_{n,\{4\},\{2,3\}},\forall n\in[1,6]\},\\
				&Z^p_{\{2,4\}}=\{W_{n,\{1\},\{2,4\}},W_{n,\{3\},\{2,4\}},\forall n\in[1,6]\},\text{ and},\\
				&Z^p_{\{3,4\}}=\{W_{n,\{1\},\{3,4\}},W_{n,\{2\},\{3,4\}},\forall n\in[1,6]\}.
			\end{align*}%
			There is $\frac{12}{12}=1$ file in each private cache, satisfying their memory constraint. From now on, the user-index set, the subfile-index set, and the mini-subfile-index set will be compactly written without the set notation. 
			
			We will now explain how the server constructs the transmissions. For a delivery vector $\mathbf{d}=(d_{\mathcal{U}}:\mathcal{U}\subset[1,\Lambda], |\mathcal{U}|=r)$, consider the mini-subfile $W_{d_{12},3,14}$ demanded by user $12$. The server calculates the intersection between the user-index set and the mini-subfile-index set as $I=\{12\cap14\}=\{1\}$. To construct the transmission, the server picks $\{1\}$ from both the user-index set, $12$, and the mini-subfile-index set, $14$, and swaps it with the subfile-index set, $3$, to obtain $W_{d_{23},1,34}$. Next, the server flips the user-index set and the mini-subfile-index set of both these mini-subfiles, obtaining $W_{d_{14},3,12}$ and $W_{d_{34},1,23}$, respectively. Finally, the server performs the XOR operation of these four mini-subfiles, as 
			\begin{align*}
				W_{d_{12},3,14}\oplus W_{d_{23},1,34}\oplus W_{d_{14},3,12}\oplus W_{d_{34},1,23}.
			\end{align*}%
			We show all the transmissions made by the server below:
			\begin{enumerate}
				\item $W_{d_{12},3,14}\oplus W_{d_{23},1,34}\oplus W_{d_{14},3,12}\oplus W_{d_{34},1,23}$
				\item $W_{d_{12},3,24}\oplus W_{d_{13},2,34}\oplus W_{d_{24},3,12}\oplus W_{d_{34},2,13}$
				\item $W_{d_{12},4,13}\oplus W_{d_{24},1,34}\oplus W_{d_{34},1,24}\oplus W_{d_{13},4,12}$
				\item $W_{d_{12},4,23}\oplus W_{d_{14},2,34}\oplus W_{d_{34},2,14}\oplus W_{d_{23},4,12}$
				\item $W_{d_{13},2,14}\oplus W_{d_{23},1,24}\oplus W_{d_{24},1,23}\oplus W_{d_{14},2,13}$, and,
				\item $W_{d_{13},4,23}\oplus W_{d_{14},3,34}\oplus W_{d_{34},3,14}\oplus W_{d_{23},4,13}$.
			\end{enumerate}%
			Since the server makes six transmissions and the subpacketization is $F=12$, the rate is $R=0.5$.
		\end{example}
		In the above example, notice that for every mini-subfile $W_{d_{\mathcal{U}},\mathcal{S},\mathcal{U}^\prime}$, there is always an intersection between the user-index set and the mini-subfile-index set, that is $I=\mathcal{U}\cap\mathcal{U}^\prime\not=\emptyset$. However, for $K\geq 2r+t$, mini-subfiles having $I=\emptyset$ also exist. The following example shows how the server constructs transmission for such mini-subfiles.
		\begin{example}
			Consider $(5,2,2,1,10)-$CMAP coded caching setting. There is a central server with a library of $N=10$ files. The server connects to $K=10$ users, equipped with private caches of capacity $M_p=1$ file. There are $\Lambda=5$ access caches, each capable of storing $M_a=2$ files such that a unique user accesses every $r=2$ caches. For this network, $t=1$. Each file is split into $\binom{\Lambda}{t}=5$ subfiles of equal size as $W_n=\{W_{n,1}, W_{n,2}, W_{n,3}, W_{n,4}, W_{n,5}\}, \forall n\in[1,10]$. The contents of the access caches are:
			\begin{align*}
				&Z_1=\{W_{n,1},\forall n\in[1,10]\},\\
				&Z_2=\{W_{n,2},\forall n\in[1,10]\},\\
				&Z_3=\{W_{n,3},\forall n\in[1,10]\},\\
				&Z_4=\{W_{n,4},\forall n\in[1,10]\},\text{ and}\\
				&Z_5=\{W_{n,5},\forall n\in[1,10]\}.
			\end{align*}
			Every access cache stores $\frac{10}{5}=2$ files, satisfying its memory constraint. Every subfile is further split into $6$ mini-subfiles of equal size. So, we have a subpacketization of $F=30$. The cache contents of the private cache of users are as shown below:
			\begin{align*}
				&Z^p_{12}=\{W_{n,3,12},W_{n,4,12},W_{n,5,12},\forall n\in[1,10]\},\\
				&Z^p_{13}=\{W_{n,2,13},W_{n,4,13},W_{n,5,13},\forall n\in[1,10]\},\\
				&Z^p_{14}=\{W_{n,2,14},W_{n,3,14},W_{n,5,14},\forall n\in[1,10]\},\\
				&Z^p_{15}=\{W_{n,2,15},W_{n,3,15},W_{n,4,15},\forall n\in[1,10]\},\\
				&Z^p_{23}=\{W_{n,1,23},W_{n,4,23},W_{n,5,23},\forall n\in[1,10]\},\\
				&Z^p_{24}=\{W_{n,1,24},W_{n,3,24},W_{n,5,24},\forall n\in[1,10]\},\\
				&Z^p_{25}=\{W_{n,1,25},W_{n,3,25},W_{n,4,25},\forall n\in[1,10]\},\\
				&Z^p_{34}=\{W_{n,1,34},W_{n,2,34},W_{n,5,34},\forall n\in[1,10]\},\\
				&Z^p_{35}=\{W_{n,1,35},W_{n,2,35},W_{n,4,35},\forall n\in[1,10]\},\text{ and},\\
				&Z^p_{45}=\{W_{n,1,45},W_{n,2,45},W_{n,3,45},\forall n\in[1,10]\}.
			\end{align*}%
			Each private cache stores $\frac{30}{30}=1$ file, satisfying its capacity.
			
			We now explain how transmissions are constructed. In this example, there are two types of mini-subfiles: those with $I \neq \emptyset$ and those with $I =\emptyset$. Since Example \ref{example1} illustrates how transmissions are constructed when $I \neq \emptyset$, we focus here on the case where $I = \emptyset$. For a demand vector $\mathbf{d} = (d_{\mathcal{U}} : \mathcal{U} \subseteq [1,\Lambda], |\mathcal{U}| = r)$, consider the mini-subfile $W_{d_{12}, 3, 45}$ requested by user 12. For this mini-subfile, $I = \{12 \cap 45\} = \emptyset$, indicating that there is no overlap between the user-index set, $12$, and the subfile-index set, $45$. The server then selects an element from the user-index set $12$ and swaps it with an element from the subfile-index set $3$, resulting in the creation of mini-subfiles $W_{d_{13}, 2, 45}$ and $W_{d_{23}, 1, 45}$. Subsequently, the server performs an XOR operation on these mini-subfiles to construct the transmission:
			\begin{align*} W_{d_{12},3,45}\oplus W_{d_{13},2,45} \oplus W_{d_{23},1,45}.\end{align*}
			Finally, the server makes the following transmissions for $I\not=\emptyset$:			
			\begin{enumerate}
				\item $W_{d_{12},3,14}\oplus W_{d_{23},1,34}\oplus W_{d_{34},1,23}\oplus W_{d_{14},3,12}$
				\item $W_{d_{12},3,15}\oplus W_{d_{23},1,35}\oplus W_{d_{35},1,23}\oplus W_{d_{15},3,12}$
				\item $W_{d_{12},3,24}\oplus W_{d_{13},2,34}\oplus W_{d_{34},2,13}\oplus W_{d_{24},3,12}$
				\item $W_{d_{12},3,25}\oplus W_{d_{13},2,35}\oplus W_{d_{35},2,13}\oplus W_{d_{25},3,12}$
				\item $W_{d_{12},4,13}\oplus W_{d_{24},1,34}\oplus W_{d_{34},1,24}\oplus W_{d_{13},4,12}$
				\item $W_{d_{12},4,15}\oplus W_{d_{24},1,45}\oplus W_{d_{45},1,24}\oplus W_{d_{15},4,12}$
				\item $W_{d_{12},4,23}\oplus W_{d_{14},2,34}\oplus W_{d_{34},2,14}\oplus W_{d_{23},4,12}$
				\item $W_{d_{12},4,25}\oplus W_{d_{14},2,45}\oplus W_{d_{45},2,14}\oplus W_{d_{25},4,12}$
				\item $W_{d_{12},5,13}\oplus W_{d_{25},1,35}\oplus W_{d_{35},1,25}\oplus W_{d_{13},5,12}$
				\item $W_{d_{12},5,14}\oplus W_{d_{25},1,45}\oplus W_{d_{45},1,25}\oplus W_{d_{14},5,12}$
				\item $W_{d_{12},5,23}\oplus W_{d_{15},2,35}\oplus W_{d_{35},2,15}\oplus W_{d_{23},5,12}$
				\item $W_{d_{12},5,24}\oplus W_{d_{15},2,45}\oplus W_{d_{45},2,15}\oplus W_{d_{24},5,12}$
				\item $W_{d_{13},2,14}\oplus W_{d_{23},1,24}\oplus W_{d_{24},1,23}\oplus W_{d_{14},2,13}$
				\item $W_{d_{13},2,15}\oplus W_{d_{23},1,25}\oplus W_{d_{25},1,23}\oplus W_{d_{15},2,13}$
				\item $W_{d_{13},4,15}\oplus W_{d_{34},1,45}\oplus W_{d_{45},1,34}\oplus W_{d_{15},4,13}$
				\item $W_{d_{13},4,23}\oplus W_{d_{14},3,24}\oplus W_{d_{24},3,14}\oplus W_{d_{23},4,13}$
				\item $W_{d_{13},4,35}\oplus W_{d_{14},3,45}\oplus W_{d_{45},3,14}\oplus W_{d_{35},4,13}$
				\item $W_{d_{13},5,14}\oplus W_{d_{35},1,45}\oplus W_{d_{45},1,35}\oplus W_{d_{14},5,13}$
				\item $W_{d_{13},5,23}\oplus W_{d_{15},3,25}\oplus W_{d_{25},3,15}\oplus W_{d_{23},5,13}$
				\item $W_{d_{13},5,34}\oplus W_{d_{15},3,45}\oplus W_{d_{45},3,15}\oplus W_{d_{34},5,13}$
				\item $W_{d_{14},2,15}\oplus W_{d_{24},1,25}\oplus W_{d_{25},1,24}\oplus W_{d_{15},2,14}$
				\item $W_{d_{14},3,15}\oplus W_{d_{34},1,35}\oplus W_{d_{35},1,34}\oplus W_{d_{15},3,14}$
				\item $W_{d_{14},5,24}\oplus W_{d_{15},4,25}\oplus W_{d_{25},4,15}\oplus W_{d_{24},5,14}$
				\item $W_{d_{14},5,34}\oplus W_{d_{15},4,35}\oplus W_{d_{35},4,15}\oplus W_{d_{34},5,14}$
				\item $W_{d_{23},4,25}\oplus W_{d_{34},2,45}\oplus W_{d_{45},2,34}\oplus W_{d_{25},4,23}$
				\item $W_{d_{23},4,35}\oplus W_{d_{24},3,45}\oplus W_{d_{45},3,24}\oplus W_{d_{35},4,23}$
				\item $W_{d_{23},5,24}\oplus W_{d_{35},2,45}\oplus W_{d_{45},2,35}\oplus W_{d_{24},5,23}$
				\item $W_{d_{23},5,34}\oplus W_{d_{25},3,45}\oplus W_{d_{45},3,25}\oplus W_{d_{34},5,23}$
				\item $W_{d_{24},3,25}\oplus W_{d_{34},2,35}\oplus W_{d_{35},2,34}\oplus W_{d_{25},3,24}$, and,
				\item $W_{d_{24},5,34}\oplus W_{d_{25},4,35}\oplus W_{d_{35},4,25}\oplus W_{d_{34},5,24}$
			\end{enumerate}
			and the following transmissions for $I=\emptyset$:
			\begin{enumerate}
				\item $W_{d_{12},3,45}\oplus W_{d_{13},2,45}\oplus W_{d_{23},1,45}$
				\item $W_{d_{12},4,35}\oplus W_{d_{14},2,35}\oplus W_{d_{24},1,35}$
				\item $W_{d_{12},5,34}\oplus W_{d_{15},2,34}\oplus W_{d_{25},1,34}$
				\item $W_{d_{13},4,25}\oplus W_{d_{14},3,25}\oplus W_{d_{34},1,25}$
				\item $W_{d_{13},5,24}\oplus W_{d_{15},3,24}\oplus W_{d_{35},1,24}$
				\item $W_{d_{14},5,23}\oplus W_{d_{15},4,23}\oplus W_{d_{45},1,23}$
				\item $W_{d_{23},4,15}\oplus W_{d_{24},3,15}\oplus W_{d_{34},2,15}$
				\item $W_{d_{23},5,14}\oplus W_{d_{25},3,14}\oplus W_{d_{35},2,14}$
				\item $W_{d_{24},5,13}\oplus W_{d_{25},4,13}\oplus W_{d_{45},2,13}$, and,
				\item $W_{d_{34},5,12}\oplus W_{d_{35},4,12}\oplus W_{d_{45},3,12}$
			\end{enumerate}
			Server makes two types of transmissions depending on whether a demanded mini-subfile $W_{d_{\mathcal{U}},\mathcal{S},\mathcal{U}^\prime}$ has $I=\emptyset$ or not. Since the server makes $40$ transmissions, the rate $R=\frac{40}{30}=\frac{4}{3}$.
		\end{example}
		We now give the general description of the placement and delivery scheme.

		\textit{Placement Phase}:\label{placementpolicy} First, we describe the placement policy of the access caches. Each file is split into $\binom{\Lambda}{t}$ non-overlapping subfiles of equal size as follows:
		\begin{equation}
			W_n=\{W_{n,\mathcal{T}}:\mathcal{T}\subseteq[1,\Lambda],|\mathcal{T}|=t\},\;\forall n\in[1,N],
		\end{equation}where $t=\frac{\Lambda M_a}{N}$, is the access cache memory replication factor, and the contents of the access cache $a\in[1,\Lambda]$ are given as:
		\begin{equation}
			\label{placementaccess}
			Z_a=\{W_{n,\mathcal{T}}:a\in\mathcal{T},\mathcal{T}\subseteq[1,\Lambda],|\mathcal{T}|=t,\forall n\in[1,N]\}.
		\end{equation}
		
		Note that each access cache is populated by $\frac{N\binom{\Lambda-1}{t-1}}{\binom{\Lambda}{t}}=M_a$ files, satisfying the memory constraint. 
		
		Now, we describe the placement strategy of the private caches. For a user $\mathcal{U}$, the server populates its private cache with parts of the subfiles $\mathcal{U}$ does not obtain from the access caches it connects to. Each subfile is wanted by $\binom{\Lambda-t}{r}$ users and hence, is further divided into $\binom{\Lambda-t}{r}$ mini-subfiles. The mini-subfile of subfile $\mathcal{T}$ of the file $n$, stored in the private cache of $\mathcal{U}$, is denoted as $W_{n,\mathcal{T},\mathcal{U}}$. The contents of the private cache of user $\mathcal{U}$ are given as:
		\begin{align}
			\label{placementprivatecaches}
			Z^p_\mathcal{U}=\{W_{n,\mathcal{T},\mathcal{U}}:\mathcal{T}\subseteq[1,\Lambda]\setminus\mathcal{U},\;|\mathcal{T}|=t,\forall n\in[1,N]\}.
		\end{align}
		Each private cache stores $\frac{N\binom{\Lambda-r}{t}}{\binom{\Lambda-t}{r}\binom{\Lambda}{t}}=\frac{N}{\binom{\Lambda}{r}}=M_p$ files, satisfying the memory constraint. Under the outlined placement policies, there is no overlap in the contents of a user's private cache and the access caches it connects to.
		\begin{algorithm*}[t]
			\caption{Algorithm for generating transmission during delivery phase}
			\label{Algo1}
			\hspace*{\algorithmicindent} \textbf{Input:} $\mathbf{d}=(d_{\mathcal{U}}:\mathcal{U}\subseteq[1,\Lambda],|\mathcal{U}|=r)$, $\mathcal{Z}=(\mathcal{Z}_{\mathcal{U}}:\mathcal{U}\subseteq[1,\Lambda],|\mathcal{U}|=r)$. \\
			\hspace*{\algorithmicindent} \textbf{Output:} The set of transmissions $T$.
			\begin{algorithmic}[1]
				\State \label{line1}Initialize $T=\emptyset$.
				\State \label{line2}For each user $\mathcal{U}$, define the user-demand set $\mathcal{D}_\mathcal{U}=\{(\mathcal{S},\mathcal{U}^\prime):\mathcal{S}\subseteq[1,\Lambda]\setminus\mathcal{U},|\mathcal{S}|=t,\;\mathcal{U}^\prime\subseteq[1,\Lambda],|\mathcal{U}^\prime|=r,\mathcal{U}^\prime\not=\mathcal{U},\mathcal{S}\cap\mathcal{U}^\prime=\emptyset\}$.
				\For{$\mathcal{U}\subseteq[1,\Lambda],|\mathcal{U}|=r$}\label{line3}
				\While{$\mathcal{D}_\mathcal{U}\not=\emptyset$}\label{line4}
				\State \label{line5}Select an element $(\mathcal{S},\mathcal{U}^\prime)$ from $\mathcal{D}_\mathcal{U}$.
				\State \label{line6}For $\mathcal{S},\mathcal{U}^\prime$, define ${I}=\mathcal{U}\cap\mathcal{U}^\prime$.
				\If{$|I|>0$}\label{line7}
				\State\label{line8} $T^\prime=flip\left(W_{d_{\mathcal{U}},\mathcal{S},\mathcal{U}^\prime}\oplus\bigoplus\limits_{i=1}^{min(|I|,t)} swap_o(W_{d_{\mathcal{U}},\mathcal{S},\mathcal{U}^\prime},i)\right).$
				\Else\label{line9}
				\State\label{line10} $T^\prime=W_{d_{\mathcal{U}},\mathcal{S},\mathcal{U}^\prime}\oplus\bigoplus\limits_{i=1}^{min(r,t)} swap_{no}(W_{d_{\mathcal{U}},\mathcal{S},\mathcal{U}^\prime},i).$
				\EndIf\label{line11}
				\State\label{line12} $T\gets T\cup T^\prime$.
				\State \label{line13}Let $S_c=\{(\mathcal{U},\mathcal{S},\mathcal{U}^\prime):W_{d_{\mathcal{U}},\mathcal{S},\mathcal{U}^\prime}\text{ is a mini-subfile in $T^\prime$}\}$. For each $(\hat{\mathcal{U}},\hat{\mathcal{S}},\hat{\mathcal{U}^\prime})\in S_c$, do $\mathcal{D}_{\hat{\mathcal{U}}}\gets\mathcal{D}_{\hat{\mathcal{U}}}\setminus (\hat{\mathcal{S}},\hat{\mathcal{U}^\prime})$
				\EndWhile\label{line14}
				\EndFor\label{line15}
			\end{algorithmic}
			\noindent\rule{\textwidth}{1pt}
		\end{algorithm*}
		Before we explain the delivery phase, we define three functions, namely, $flip$, $swap_o$, and $swap_{no}$.
		\begin{defn}
			For a subfile $W_{d_{\mathcal{U}},\mathcal{S},\mathcal{U}^\prime}$, the function \textit{flip} is defined as $flip(W_{d_{\mathcal{U}},\mathcal{S},\mathcal{U}^\prime})=W_{d_{\mathcal{U}},\mathcal{S},\mathcal{U}^\prime}\oplus W_{d_{\mathcal{U}^\prime},\mathcal{S},\mathcal{U}}$.	
		\end{defn}
		\begin{remark}
			$flip(\bigoplus\limits_{i=1}^{n}W_{d_{\mathcal{U}_i},\mathcal{S}_i,\mathcal{U}^\prime_i})=\bigoplus\limits_{i=1}^{n}(flip(W_{d_{\mathcal{U}_i},\mathcal{S}_i,\mathcal{U}^\prime_i}))$.
		\end{remark}
		\begin{example}
			For a mini-subfile $W_{d_{12},3,14}$, we have $flip(W_{d_{12},3,14})=W_{d_{12},3,14}\oplus W_{d_{14},3,12}$.
		\end{example}
		\begin{defn}
			For a subfile $W_{d_{\mathcal{U}},\mathcal{S},\mathcal{U}^\prime}$, such that $\mathcal{U}$ and $\mathcal{U}^\prime$ overlap, i.e., $I=\mathcal{U}\cap\mathcal{U}^\prime\not=\emptyset$, the function $swap_o$ is defined as 			
			\begin{equation*}
				swap_o(W_{d_{\mathcal{U}},\mathcal{S},\mathcal{U}^\prime},i)=\bigoplus\limits_{\substack{\widetilde{\mathcal{U}}\subseteq I,\\|\widetilde{\mathcal{U}}|=i,\\\widetilde{\mathcal{S}}\subseteq\mathcal{S},\\|\widetilde{\mathcal{S}}|=i}} W_{d_{\{\mathcal{U}\cup\widetilde{\mathcal{S}}\}\setminus\widetilde{\mathcal{U}}},\{\mathcal{S}\cup\widetilde{\mathcal{U}}\}\setminus\widetilde{\mathcal{S}},\{\mathcal{U}^\prime\cup\widetilde{\mathcal{S}}\}\setminus\widetilde{\mathcal{U}}}.
			\end{equation*}
		\end{defn}
		\begin{example}
			Consider a mini-subfile $W_{d_{123},45,126}$ for which $swap_o(W_{d_{123},45,126},1)=W_{d_{234},15,246}\oplus W_{d_{134},25,146}\oplus W_{d_{235},14,256}\oplus W_{d_{135},24,156}$. Observe that all possible $1-$subsets of the intersection set $\mathcal{U} \cap \mathcal{U}^\prime$ have been swapped with all possible $1-$subsets of the subfile-index set.
			\end{example}
		For the mini-subfile $W_{d_{\mathcal{U}},\mathcal{S},\mathcal{U}^\prime}$, such that there is no overlap between $\mathcal{U}$ and $\mathcal{U}^\prime$, we define the function $swap_{no}$ as follows. 
		\begin{defn}
			For a subfile $W_{d_{\mathcal{U}},\mathcal{S},\mathcal{U}^\prime}$, such that $\mathcal{U}\cap\mathcal{U}^\prime=\emptyset$, the function $swap_{no}$ is defined as 
			\begin{equation*}
				swap_{no}(W_{d_{\mathcal{U}},\mathcal{S},\mathcal{U}^\prime},i)=\bigoplus\limits_{\substack{\widetilde{\mathcal{U}}\subseteq\mathcal{U},\\|\widetilde{\mathcal{U}}|=i,\\\widetilde{\mathcal{S}}\subseteq\mathcal{S},\\|\widetilde{\mathcal{S}}|=i}} W_{d_{\{\mathcal{U}\cup\widetilde{\mathcal{S}}\}\setminus\widetilde{\mathcal{U}}},\{\mathcal{S}\cup\widetilde{\mathcal{U}}\}\setminus\widetilde{\mathcal{S}},\mathcal{U}^\prime}.
			\end{equation*}					
			
		\end{defn}
		\begin{example}
			For a mini-subfile $W_{d_{123},45,678}$, we have $swap_{no}(W_{d_{123},45,678},2)=W_{d_{345},12,678}+W_{d_{245},13,678}+W_{d_{145},23,678}$, which is obtained by swapping every $2-$subset of the user-index set $123$ with the sub-file index set $45$.
		\end{example}
		\textit{Delivery Phase}: For the demand vector $\mathbf{d}$, the server broadcasts the set of transmissions $T$ returned by Algorithm \ref{Algo1}. We will now explain the working of Algorithm \ref{Algo1}.
		
		Given a demand vector $\mathbf{d}$ and the placement policy described in section \ref{achievability}, Algorithm \ref{Algo1} returns the set of transmissions $T$ required to satisfy the demands of all the $K$ users. For each user $\mathcal{U}$, the algorithm first defines the user-demand set $\mathcal{D}_\mathcal{U}$, which contains the indices of all the mini-subfiles that are wanted by that user. For instance, in Example \ref{example1}, the user-demand sets for the users are as given below:
		\begin{align*}
			\mathcal{D}_{12}=\{(3,14),(3,24),(4,13),(4,23)\},\\
			\mathcal{D}_{13}=\{(2,14),(2,34),(4,12),(4,23)\},\\
			\mathcal{D}_{14}=\{(2,13),(2,34),(3,12),(3,24)\},\\
			\mathcal{D}_{23}=\{(1,24),(1,34),(4,12),(4,13)\},\\
			\mathcal{D}_{24}=\{(1,23),(1,34),(3,12),(3,14)\},\\
			\mathcal{D}_{34}=\{(1,23),(1,24),(2,13),(2,14)\}.
		\end{align*} The algorithm then selects a user and picks an element from the user-demand set of this user. For this element, the algorithm calculates the intersection between the user-index set of the user and the mini-subfile-index set of the selected mini-subfile. Let us say, for Example \ref{Algo1}, the algorithm picks the user $12$ and selects the element $(3,14)$, describing the mini-subfile $W_{d_{12},3,14}$. The intersection for the mini-subfile $W_{d_{12},3,14}$ is $I=\{12\cup 14\}=\{1\}$. Depending on whether this intersection is empty or not, the algorithm constructs the transmissions described in Line \ref{line8} or Line \ref{line10}, respectively. Since $I\not=\emptyset$ for the mini-subfile $W_{d_{12},3,14}$ being considered, the algorithm constructs the transmission $W_{d_{12},3,14}\oplus W_{d_{23},1,34}\oplus W_{d_{14},3,12}\oplus W_{d_{34},1,23}$ as described in Line \ref{line8}. Finally, the algorithm removes the indices of all the mini-subfiles in the constructed transmission from their respective user-demand sets. Hence, the user-demand sets of the users in Example \ref{example1} after construction of the above transmission are:
		\begin{align*}
			&\mathcal{D}_{12}=\{(3,24),(4,13),(4,23)\},\\
			&\mathcal{D}_{13}=\{(2,14),(2,34),(4,12),(4,23)\},\\
			&\mathcal{D}_{14}=\{(2,13),(2,34),(3,24)\},\\
			&\mathcal{D}_{23}=\{(1,24),(4,12),(4,13)\},\\
			&\mathcal{D}_{24}=\{(1,23),(1,34),(3,12),(3,14)\},\\
			&\mathcal{D}_{34}=\{(1,24),(2,13),(2,14)\}.
		\end{align*}

		\textit{Decodability}: Algorithm \ref{Algo1} generates two types of transmissions based on whether $I=\mathcal{U}\cap\mathcal{U}^\prime=\emptyset$ or $I=\mathcal{U}\cap\mathcal{U}^\prime\not=\emptyset$. Consider a transmission for the no overlap case $|I|=0$, $W_{d_{\mathcal{U}},\mathcal{S},\mathcal{U}^\prime}\oplus\bigoplus\limits_{i=1}^{min(r,t)} swap_{no}(W_{d_{\mathcal{U}},\mathcal{S},\mathcal{U}^\prime},i)$. Every mini-subfile in the function $swap_{no}(W_{d_{\mathcal{U}},\mathcal{S},\mathcal{U}^\prime},i)$ has non-zero intersection between its subfile-index set and $\mathcal{U}$. Hence, $\mathcal{U}$ will have these mini-subfiles from the access caches it connects to and can obtain its desired mini-subfile. Now consider a transmission where $|I|>0$, $flip(W_{d_{\mathcal{U}},\mathcal{S},\mathcal{U}^\prime}\oplus\bigoplus\limits_{i=1}^{min(|I|,t)} swap_o(W_{d_{\mathcal{U},\mathcal{S},\mathcal{U}^\prime}},i))$. Note that the user $\mathcal{U}$ can remove all mini-subfiles from this transmission, except $W_{d_{\mathcal{U}},\mathcal{S},\mathcal{U}^\prime}$ and $W_{d_{\mathcal{U}^\prime},\mathcal{S},\mathcal{U}}$ using the contents of its access caches. But the mini-subfile $W_{d_{\mathcal{U}^\prime},\mathcal{S},\mathcal{U}}$ is in its private cache. Thus, the user $\mathcal{U}$ can decode its desired mini-subfile. Since both types of transmissions are decodable and Algorithm \ref{Algo1} runs until all the user-demand sets are empty, every user is able to obtain all the mini-subfiles of its desired file.
		
		\textit{Performance of Algorithm \ref{Algo1}}: For the $|I|=0$ case, each transmission has $\sum\limits_{j=0}^{min(r,t)} \binom{r}{j}\binom{t}{j}$ mini-subfiles. Using Vandermonde's identity, we know that $\sum\limits_{j=0}^{min(r,t)} \binom{r}{j}\binom{t}{j}=\binom{t+r}{t}$. For the $|I|=i$ case, each transmission has $2\sum\limits_{j=1}^{min(i,t)} \binom{t}{j}\binom{i}{j}=2\binom{t+i}{i}$ mini-subfiles.
		Since there are $\binom{\Lambda}{t}\binom{\Lambda-t}{r}\binom{\Lambda-r-t}{r}$ mini-subfiles that have $|I|=0$ and $\binom{\Lambda}{t}\binom{\Lambda-t}{r}\binom{r}{i}\binom{\Lambda-r-t}{r-i}$ mini-subfiles that have $|I|=i$, and each file is divided into $\binom{\Lambda}{t}\binom{\Lambda-t}{r}$ mini-subfiles, the rate is
		
		\begin{align}
			R=&\frac{\binom{\Lambda}{t}\binom{\Lambda-t}{r}\binom{\Lambda-r-t}{r}}{\binom{\Lambda}{t}\binom{\Lambda-t}{r}\binom{t+r}{r}} + \sum\limits_{i=1}^{r-1} \frac{\binom{\Lambda}{t}\binom{\Lambda-t}{r}\binom{r}{i}\binom{\Lambda-r-t}{r-i}}{2\binom{\Lambda}{t}\binom{\Lambda-t}{r}\binom{t+i}{i}}\nonumber\\
			=&\frac{\binom{\Lambda-r-t}{r}}{\binom{t+r}{t}}+\sum\limits_{i=1}^{r-1}\frac{\binom{r}{i}\binom{\Lambda-t-r}{r-i}}{2\binom{t+i}{i}}.
		\end{align}
		\begin{remark}
			\label{remark3}
			It can be seen that for the case where $|I|=0$, the coding gain, defined as the total number of users benefiting from
			each transmission, is $\binom{t+r}{r}$ and when $|I|=i$, the coding gain is $2\binom{t+i}{i}$. Hence, as the access cache memory replication factor $t$ increases, the coding gain for both types of transmissions increases, while as the access degree $r$ increases, the coding gain of the transmissions of $|I|=0$ case increases. 
		\end{remark}
		We will now explain how memory sharing is done for the CMAP coded caching system.
		\begin{remark}
			Consider $M_a$ such that $t=\frac{\Lambda M_a}{N}$ is not an integer. Let $M_1=\frac{\lceil t\rceil N}{\Lambda}$ and $M_2=\frac{\lfloor t\rfloor N}{\Lambda}$. Since $M_a=\frac{tN}{\Lambda}$, we know that $M_2\leq M_a\leq M_1$. Hence, $M_a$ can be written as 
			\begin{align*}
				M_a=\alpha M_1 + (1-\alpha)M_2,
			\end{align*}for some $0\leq\alpha\leq 1$. The file $W_{n}$ is split into $W_{n}^\alpha$, of $\alpha B$ bits, and $W_{n}^{(1-\alpha)}$, of $(1-\alpha)B$ bits, respectively, $\forall n\in[1,N]$. The file $W_{n}^\alpha$ is further broken down into subfiles as $W_{n}^\alpha=\{W_{n,\mathcal{S}}^\alpha: \mathcal{S}\subseteq[1,\Lambda],|\mathcal{S}|=\lceil t\rceil\}$, while the file $W_{n}^{(1-\alpha)}$ is broken into subfiles as $W_{n}^{(1-\alpha)}=\{W_{n,\mathcal{S}}^{(1-\alpha)}: \mathcal{S}\subseteq[1,\Lambda],|\mathcal{S}|=\lfloor t\rfloor\}$. The access caches are filled with subfiles $W_{n,\mathcal{S}}^{\alpha}$ as described in \eqref{placementaccess}, for $t=\lceil t\rceil$ and with subfiles $W_{n,\mathcal{S}}^{(1-\alpha)}$, as described in \eqref{placementaccess}, for $t=\lfloor t\rfloor$. Thus, every access cache stores $N\alpha\binom{\Lambda-1}{\lceil t\rceil-1}B+N(1-\alpha)\binom{\Lambda-1}{\lfloor t\rfloor-1}B$ bits, which is equivalent to $N\alpha\frac{\binom{\Lambda-1}{\lceil t\rceil-1}}{\binom{\Lambda}{\lceil t\rceil}}+N(1-\alpha)\frac{\binom{\Lambda-1}{\lfloor t\rfloor-1}}{\binom{\Lambda}{\lfloor t\rfloor}}=\frac{N\alpha\lceil t\rceil}{\Lambda}+\frac{N(1-\alpha)\lfloor t\rfloor}{\Lambda}=\alpha M_1 +(1-\alpha)M_2=M_a$ files, satisfying its memory constraint. 
			
			The private caches of the users will be populated with the mini-subfiles of $W_{n,\mathcal{S}}^{\alpha}$ and $W_{n,\mathcal{S}}^{(1-\alpha)}$ as described in \eqref{placementprivatecaches} for $\lceil t\rceil$ and $\lfloor t \rfloor$. Every private cache stores $\frac{\alpha N\binom{\Lambda-r}{\lceil t\rceil}}{\binom{\Lambda-\lceil t\rceil}{r}\binom{\Lambda}{\lceil t\rceil}}+\frac{(1-\alpha)N\binom{\Lambda-r}{\lfloor t\rfloor }}{\binom{\Lambda-\lfloor t\rfloor}{r}\binom{\Lambda}{\lfloor t\rfloor}}=\frac{\alpha N}{\binom{\Lambda}{r}}+\frac{(1-\alpha) N}{\binom{\Lambda}{r}}=\alpha M_p+(1-\alpha)M_p=M_p$, satisfying its memory constraint. The rate corresponding to $t=\lceil t\rceil$ is $\alpha R_1$ and the rate corresponding to $t=\lfloor t\rfloor$ is $(1-\alpha)R_2$, respectively. Thus,
			\begin{align*}
				R_{M_a}=\alpha R_{M_1}+(1-\alpha) R_{M_2}.
			\end{align*} 
		\end{remark}
		\begin{remark}
			Consider a $(\Lambda,r, M_a, M_p=0, N)-$CMAP coded caching system, where private caches have no memory. Users solely rely on the cache contents of the access caches they connect to. This scenario mirrors the settings explored in the MAN scheme for CMACC network\cite{PD}. In this CMAP coded caching system, mini-subfiles follow the structure $W_{d_{\mathcal{U}},\mathcal{S},\emptyset}$, since $M_p=0$. With $M_p=0$, the cache contents user $\mathcal{U}$ has access to is $\mathcal{Z}_{\mathcal{U}}=\{W_{d_{\mathcal{U}},\mathcal{S},\emptyset}:\mathcal{S}\subseteq[1,\Lambda],|\mathcal{S}|=t,\mathcal{U}\cap\mathcal{S}\not=\emptyset\}$. Consequently, the subpacketization for this CMAP system equals $F=\binom{\Lambda}{t}$, which is equal to the subpacketization of the MAN scheme for CMACC network\cite{PD}, for $t=t$. For the mini-subfile $W_{d_\mathcal{U},\mathcal{S},\emptyset}$, we have $I=\mathcal{U}\cap\emptyset=\emptyset$, that is, $|I|=0$. Since $|I|=0$ for every mini-subfile, the transmission made for the mini-subfile $W_{d_{\mathcal{U}},\mathcal{S},\emptyset}$ will be of the form $W_{d_{\mathcal{U}},\mathcal{S},\emptyset}\oplus\bigoplus\limits_{i=1}^{min(r,t)} swap_{no}(W_{d_{\mathcal{U}},\mathcal{S},\emptyset},i)$. Hence, every transmission will be a coded combination of $\binom{t+r}{r}$ mini-subfile and a transmission will be made for every $\mathcal{U}$ and $\mathcal{S}$ such that $\mathcal{U}\cap\mathcal{S}=\emptyset$. Therefore, a transmission is made for every subset of the set $[1,\Lambda]$ of cardinality $t+r$. Thus, we get a rate $R=\frac{\binom{\Lambda}{t+r}}{\binom{\Lambda}{t}}$. This is the same delivery scheme as the MAN scheme for the CMACC network\cite{PD}.
		\end{remark}
		Hence, the proposed scheme specializes to the scheme present in \cite{PD}, when private caches have no memory.
		\subsection{Alpha Bound}
		\label{alphabound}
		The proof of Theorem \ref{thm2} formulates the delivery phase as an ICP $\mathcal{I}$ as was done in \cite{KTR}. We find a lower bound on $\alpha(\mathcal{I})$ and use it to lower bound the number of transmissions made in the delivery phase. We define $\mathcal{U}_i$ as the $i^{\text{th}}\text{ user},i\in\left[1,\binom{\Lambda}{r}\right]$, and $\mathcal{U}^\mathcal{S}_j$ as the $j^{\text{th}}\text{ user},j\in\left[1,\binom{\Lambda-t}{r}\right]$, who wants subfile $\mathcal{S}$, respectively, when the users are arranged lexicographically. We construct the set $B(\mathbf{d})=B_1(\mathbf{d})\cup B_2(\mathbf{d})$, whose elements are messages of the ICP $\mathcal{I}$ such that the set of indices of the messages in $B(\mathbf{d})$ forms a generalized independent set, where,
		\begin{equation*}
			\begin{split}
				B_1(\mathbf{d})\text{=}\bigcup\limits_{i=1}^{\Lambda-r-t+1}\bigcup\limits_{k=i+1}^{\binom{\Lambda-t}{r}} \Big\{W_{d_{\mathcal{U}_i},\mathcal{S},\mathcal{U}^{\mathcal{S}}_k}:\mathcal{S}\subseteq[r+i,\Lambda],|S|=t\Big\}
			\end{split}
		\end{equation*}
		
		\begin{equation*}\text{\normalsize and, }
			\begin{split}
				B_2(\mathbf{d})=\bigcup\limits_{m=m^\prime}^{\binom{\Lambda-t}{r}}\bigcup\limits_{k=m+1}^{\binom{\Lambda-t}{r}}\Big\{W_{d_{\mathcal{U}^{\mathcal{S}^\prime}_m},\mathcal{S}^\prime,U^{\mathcal{S}^\prime}_k}\Big\},
			\end{split}
		\end{equation*}
		where $\mathcal{S}^\prime=[\Lambda-t+1,\Lambda]$ and $m^\prime=\Lambda-r-t+2$. Let $H(\mathbf{d})$ be the set of indices of the messages in $B(\mathbf{d})$.
		
			\textit{Claim:} $H(\mathbf{d})$ forms a generalized independent set.
			
			Each message in $B(\mathbf{d})$ is demanded by one receiver. Hence, all the subsets of $H(\mathbf{d})$ of size one are present in $\mathcal{J}(\mathcal{I})$. Consider any set $C=\{W_{d_{\mathcal{U}_{i_1}},\mathcal{S}_{j_1},\mathcal{U}^\prime_{{l_1}}},W_{d_{\mathcal{U}_{i_2}},\mathcal{S}_{{j_2}},\mathcal{U}^\prime_{{l_2}}},\cdots,W_{d_{\mathcal{U}_{i_c}},\mathcal{S}_{{j_c}},\mathcal{U}^\prime_{{l_c}}},\} \subseteq B(\mathbf{d}),$ where, $i_1 \leq i_2 \leq\cdots\leq i_k$. Consider the message $W_{d_{\mathcal{U}_{i_1}},\mathcal{S}_{j_1},\mathcal{U}^\prime_{{l_1}}}$. The receiver demanding this message has no other message in $C$ as side information. Thus indices of messages in $C$ lie in $\mathcal{J}(\mathcal{I})$ and any subset of $H(\mathbf{d})$ will lie in $\mathcal{J}(\mathcal{I})$.
		
		As $H(\mathbf{d})$ is a generalized independent set, we have $\alpha\geq|H(\mathbf{d})|$ as $|H(\mathbf{d})|$ is equal to $|B(\mathbf{d})|$. Since both the terms in $B(\mathbf{d})$ are disjoint, we count the elements in $B_1(\mathbf{d})$ and $B_2(\mathbf{d})$. Consider a user $\mathcal{U}_i$ in $B_1(\mathbf{d})$. The number of subfiles corresponding to $\mathcal{U}_i$ is $\binom{\Lambda-r-i+1}{t}$ and the number of mini-subfiles corresponding to these subfiles is $\binom{\Lambda-r}{t}-i$. Thus, $|B_1(\mathbf{d})|=\underbrace{\sum\limits_{i=0}^{\Lambda-r-t} \binom{\Lambda-r-i}{t}\left[\binom{\Lambda-t}{r}-1\right]}_{\text{Term $|B_{1,1}(\mathbf{d})|$}}-\underbrace{\sum\limits_{i=0}^{\Lambda-r-t} i\binom{\Lambda-r-i}{t}}_{\text{Term $|B_{1,2}(\mathbf{d})|$}}.$
		Using Hockey-Stick identity, we get	 
		\begin{equation*}
			|B_{1,1}(\mathbf{d})|=\left[\binom{\Lambda-t}{r}-1\right]\binom{\Lambda-r+1}{t+1}\text{\normalsize\;     and, } 
		\end{equation*}
		\begin{equation*}
			|B_{1,2}(\mathbf{d})|=(\Lambda-r+1)\binom{\Lambda-r+1}{t+1}-(t+1)\binom{\Lambda-r+2}{t+2},
		\end{equation*}
		which, upon further simplification, leads to
		\begin{equation*}
			\begin{split}
				|B_1(\mathbf{d})|=&\binom{\Lambda-t}{r}\binom{\Lambda-r+1}{t+1}-\binom{\Lambda-r+2}{t+2}.
			\end{split}
		\end{equation*}
		We now consider $B_2(\mathbf{d})$. For a user $\mathcal{U}^{\mathcal{S}^\prime}_m$,  there are $\binom{\Lambda-t}{r}-m$ mini-subfiles of the subfile $\mathcal{S}^\prime$ in $B_2(\mathbf{d})$ which implies 
		\begin{equation*}
			\begin{split}
				&|B_2(\mathbf{d})|=\sum\limits_{m=\Lambda-r-t+2}^{\binom{\Lambda-t}{r}}\left[\binom{\Lambda-t}{r}-m\right].
			\end{split}
		\end{equation*}
		It can be observed that $|B_2(\mathbf{d})|=1+2+\cdots+\binom{\Lambda-t}{r}-\Lambda+r+t-2$. Hence,
		\begin{equation*}
			|B_2(\mathbf{d})|=\frac{\left(\binom{\Lambda-t}{r}-\Lambda+r+t-2\right)\left(\binom{\Lambda-t}{r}-\Lambda+r+t-1\right)}{2}.
		\end{equation*}
		Finally, we have, 
		\begin{equation*}
			\begin{split}
				\alpha\geq&\binom{\Lambda-t}{r}\binom{\Lambda-r+1}{t+1}-\binom{\Lambda-r+2}{t+2}+\\&\frac{\left(\binom{\Lambda-t}{r}-\Lambda+r+t-2\right)\left(\binom{\Lambda-t}{r}-\Lambda+r+t-1\right)}{2}.
			\end{split}
		\end{equation*}
		
		%
		%
		Since $\alpha$ is the cardinality of the maximal generalized independent set, we have $\alpha\geq |H(\mathbf{d})|=|B(\mathbf{d})|$.
		The theorem is proved since $\alpha$ lower bounds $T$.
		\section{Numerical Comparison}
		\label{numericalcomparison}
		In this section, we compare the rate of the proposed scheme in Theorem \ref{thm2} with the upper and lower bounds in Proposition \ref{prop1}, the index-coding based lower bound in Theorem \ref{thm3}, normalized by the subpacketization of the proposed scheme, and the cut-set bound derived in Theorem \ref{thm1}. We provide numerical plots of the rate $R$ for different values of the access degree $r$ and the access cache memory replication factor, $t$ for a system with $\Lambda=6$ access caches, $N$ files, and $K = \binom{\Lambda}{r}$ users such that $N=K$, and $M_p= \frac{N}{K}=1$. The three sets of plots in Fig. \ref{fig2} correspond to $r=2,3$, and $r=4$ cases with $t$ taking values in $[1,\Lambda]$. It can be observed that the rate $R$ approaches $R^{\textasteriskcentered}_D(rM_a+M_p)$ as either $r$ or $t$ increases.
		
		We provide Fig. \ref{fig3}, Fig. \ref{fig4}, Fig \ref{fig5}, and, Fig. \ref{fig6} to further illustrate Remark \ref{remark3}. Fig. \ref{fig3} and Fig. \ref{fig4} correspond to a CMAP system with $\Lambda=6$ access caches, $N$ files, and $K$ users such that each user connects to $r=2$ and $r=3$ access caches respectively. For the access cache memory replication factor $t\in[1,\Lambda]$, the rate of the achievable scheme $R$, rate of the MAN scheme\cite{MAN} such that each cache has a memory of $rM_a+M_p$, rate of the MAN scheme for CMACC network\cite{PD} such that each cache has a capacity of $M_a+\frac{M_p}{r}$, the lower bound on the optimal worst-case rate derived in Theorem \ref{thm1}, and, the lower bound in Theorem \ref{thm3}, normalized by the subpacketization have been plotted. It can be seen from Fig. \ref{fig3} and Fig. \ref{fig4} that the rate of the proposed scheme $R$ moves closer to the lower bound in Theorem \ref{thm3} as $t$ increases for a fixed $r$.
		
		Similarly, Fig. \ref{fig5} and Fig. \ref{fig6} correspond to a CMAP coded caching system with $\Lambda=6$ access caches with a capacity of $M_a=\frac{N}{6}$ and $M_a=\frac{N}{3}$, respectively. A central server with $N$ files and $K$ users connects to the system, with the access degree $r\in[2,\Lambda]$. For this system, the rate of the achievable scheme $R$, the rate of the MAN scheme\cite{MAN}, and the rate of the MAN scheme for CMACC network\cite{PD} keeping the total memory accessed by a user the same in all the three cases, as well as the lower bound in Theorem \ref{thm3}, and, the lower bound in Theorem \ref{thm3}, normalized by the subpacketization have been plotted. It can be seen from Fig. \ref{fig5} and Fig. \ref{fig6} that the rate of the proposed scheme $R$ moves closer to the lower bound in Theorem \ref{thm3} as $r$ increases for a fixed $t$.
		\begin{figure}
			\includegraphics[width=0.8\textwidth,height=0.43\textwidth]{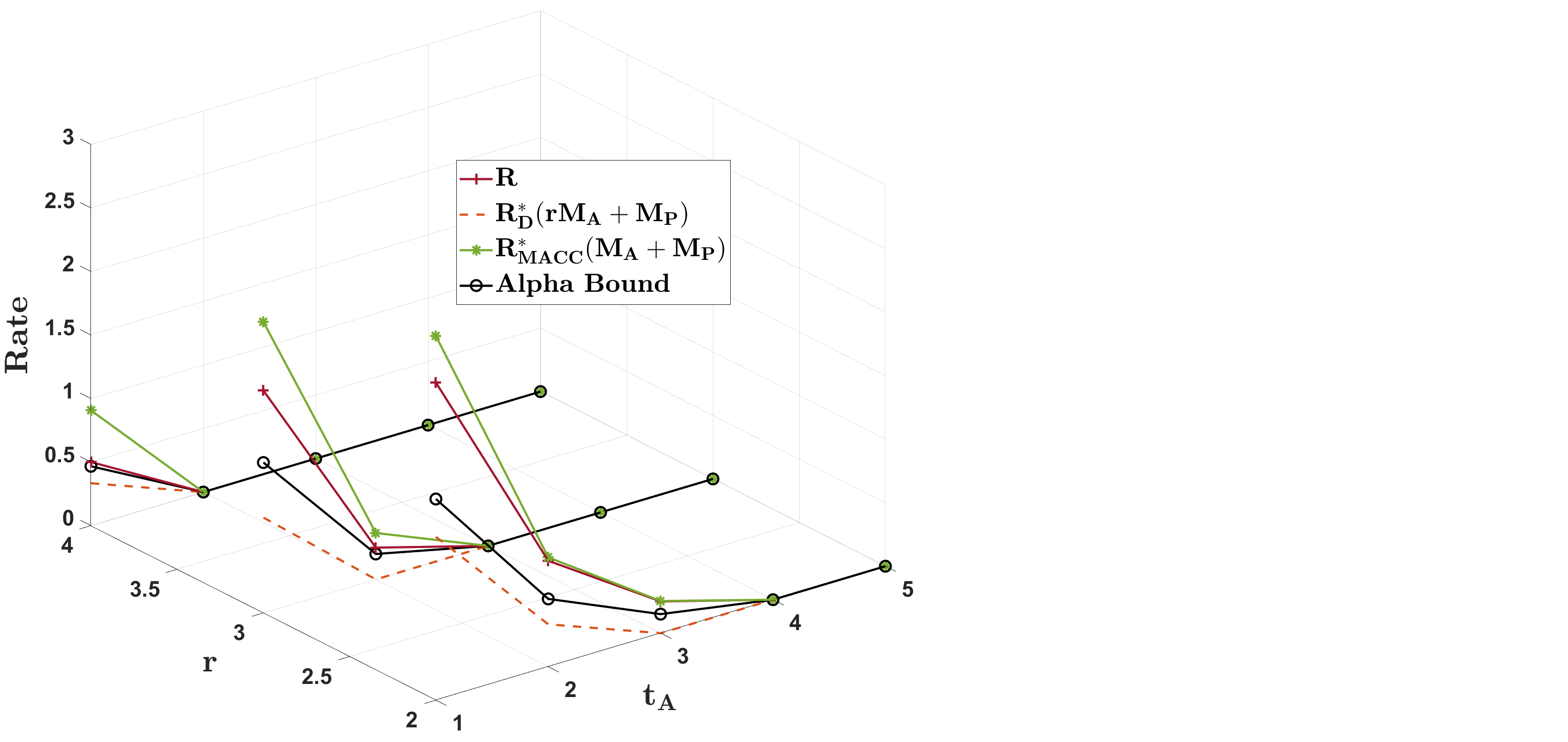}
			\caption{Rate vs. $r$ and $t$ for $M_p = \frac{N}{K}$.}
			\label{fig2}
		\end{figure}
		\begin{figure}
			\includegraphics[width=\textwidth,height=0.5\textwidth]{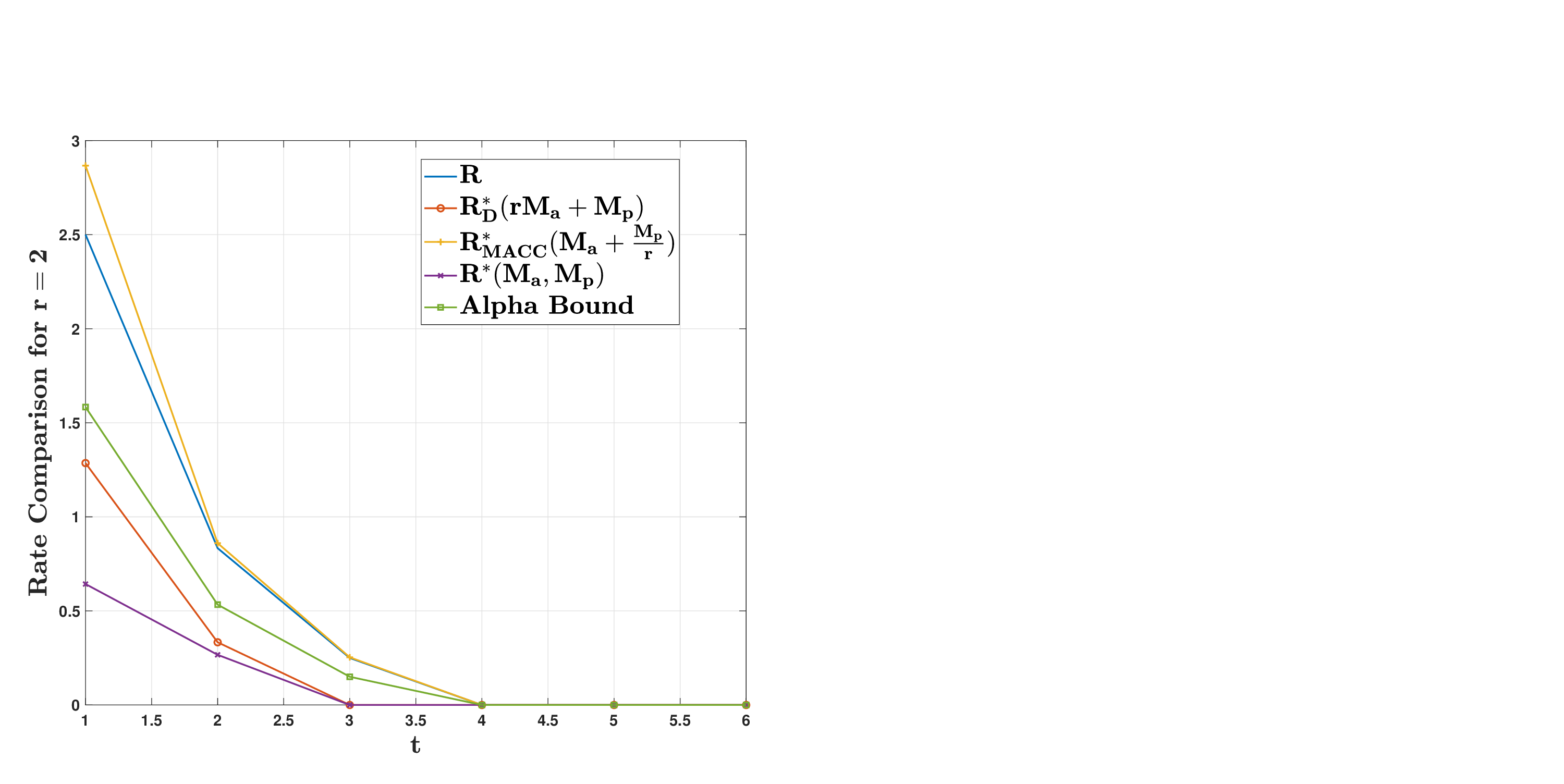}
			\caption{Rate vs. $t$ for $r=2$ and $M_p = \frac{N}{K}$.}
			\label{fig3}
		\end{figure}
		\begin{figure}
			\includegraphics[width=\textwidth,height=0.5\textwidth]{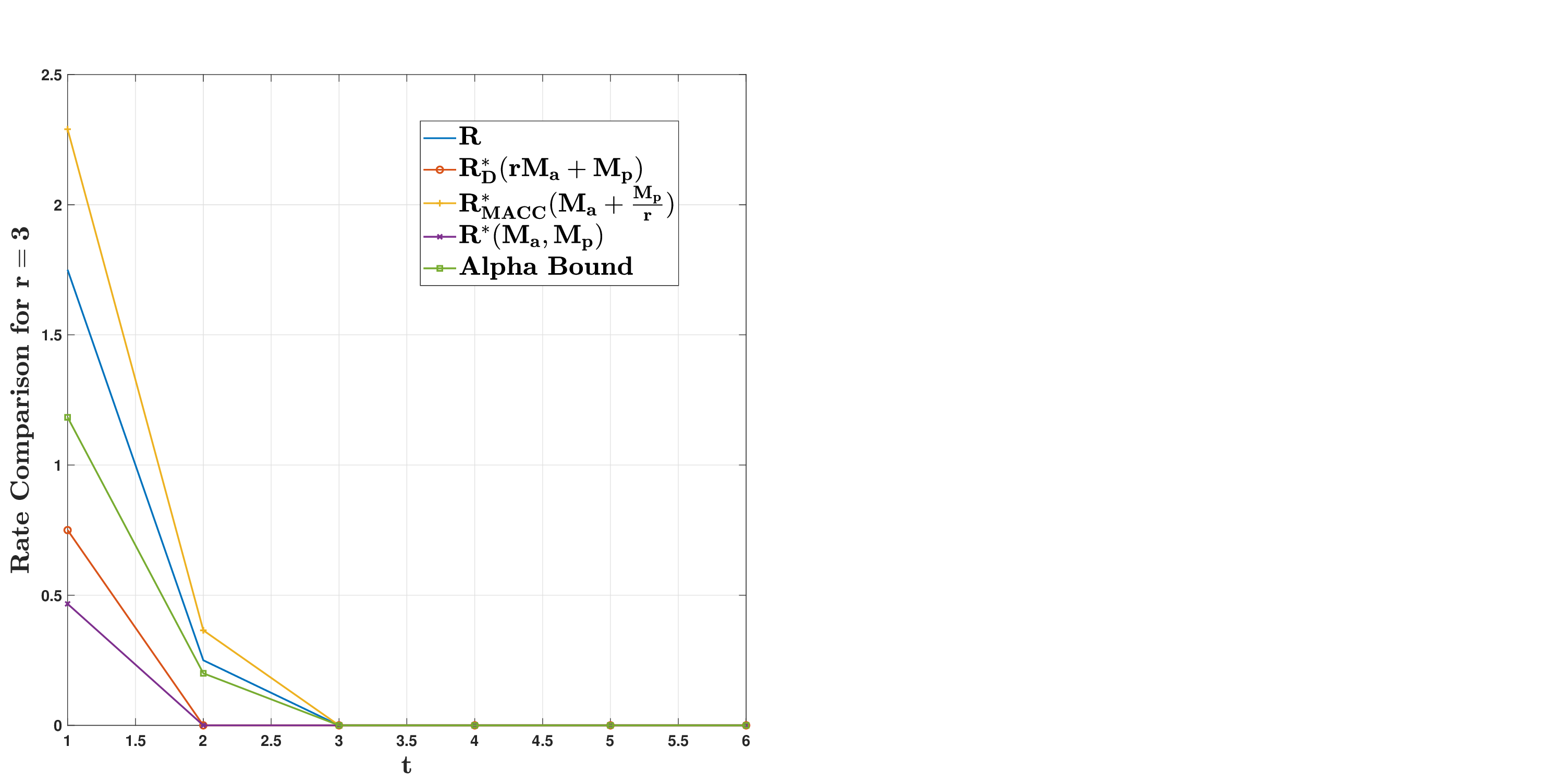}
			\caption{Rate vs. $t$ for $r=3$ and $M_p = \frac{N}{K}$.}
			\label{fig4}
		\end{figure}
		\begin{figure}
			\includegraphics[width=\textwidth,height=0.46\textwidth]{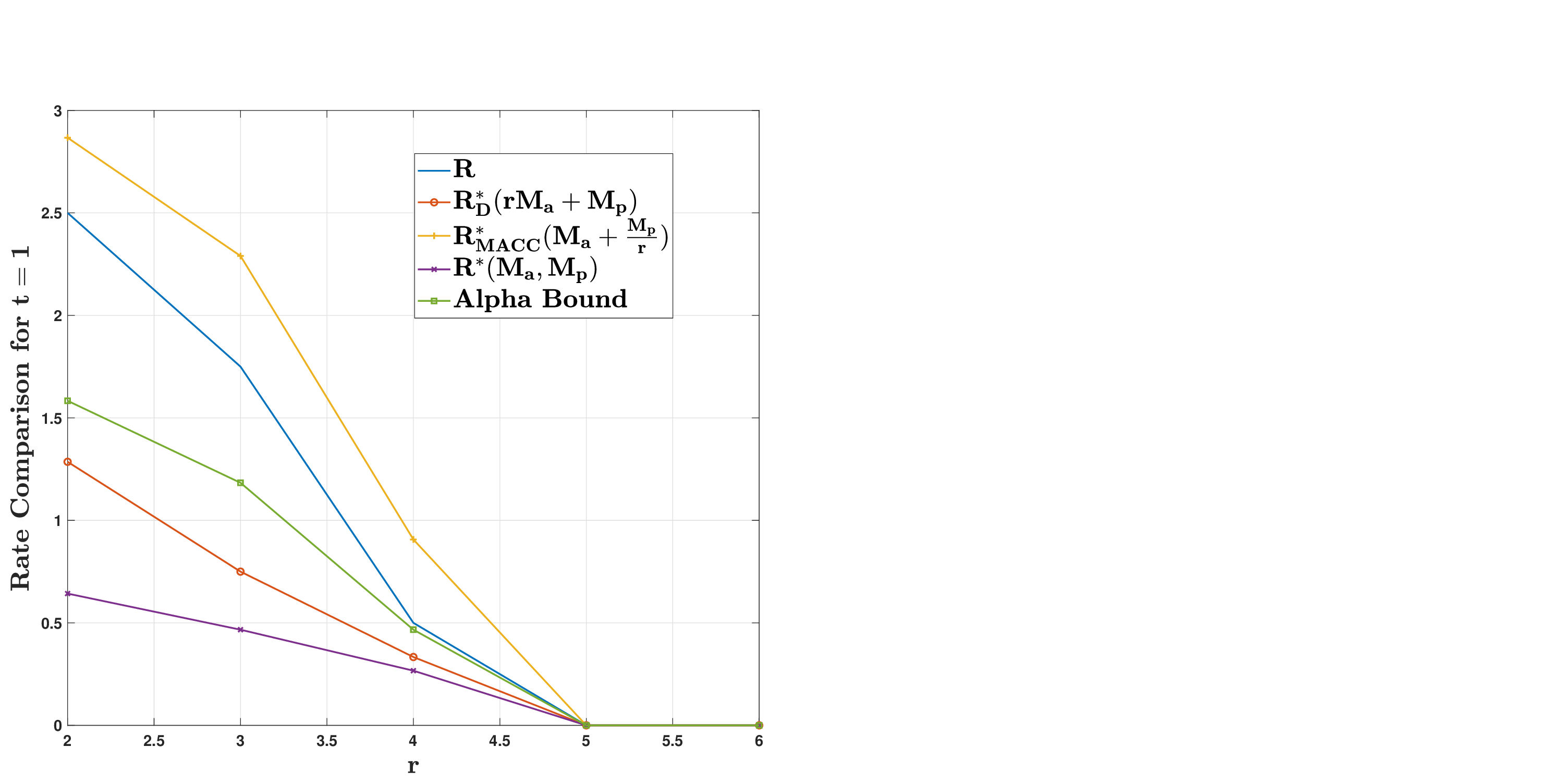}
			\caption{Rate vs. $r$ for $t=1$ and $M_p = \frac{N}{K}$.}
			\label{fig5}
		\end{figure}
		\begin{figure}
			\includegraphics[width=\textwidth,height=0.493\textwidth]{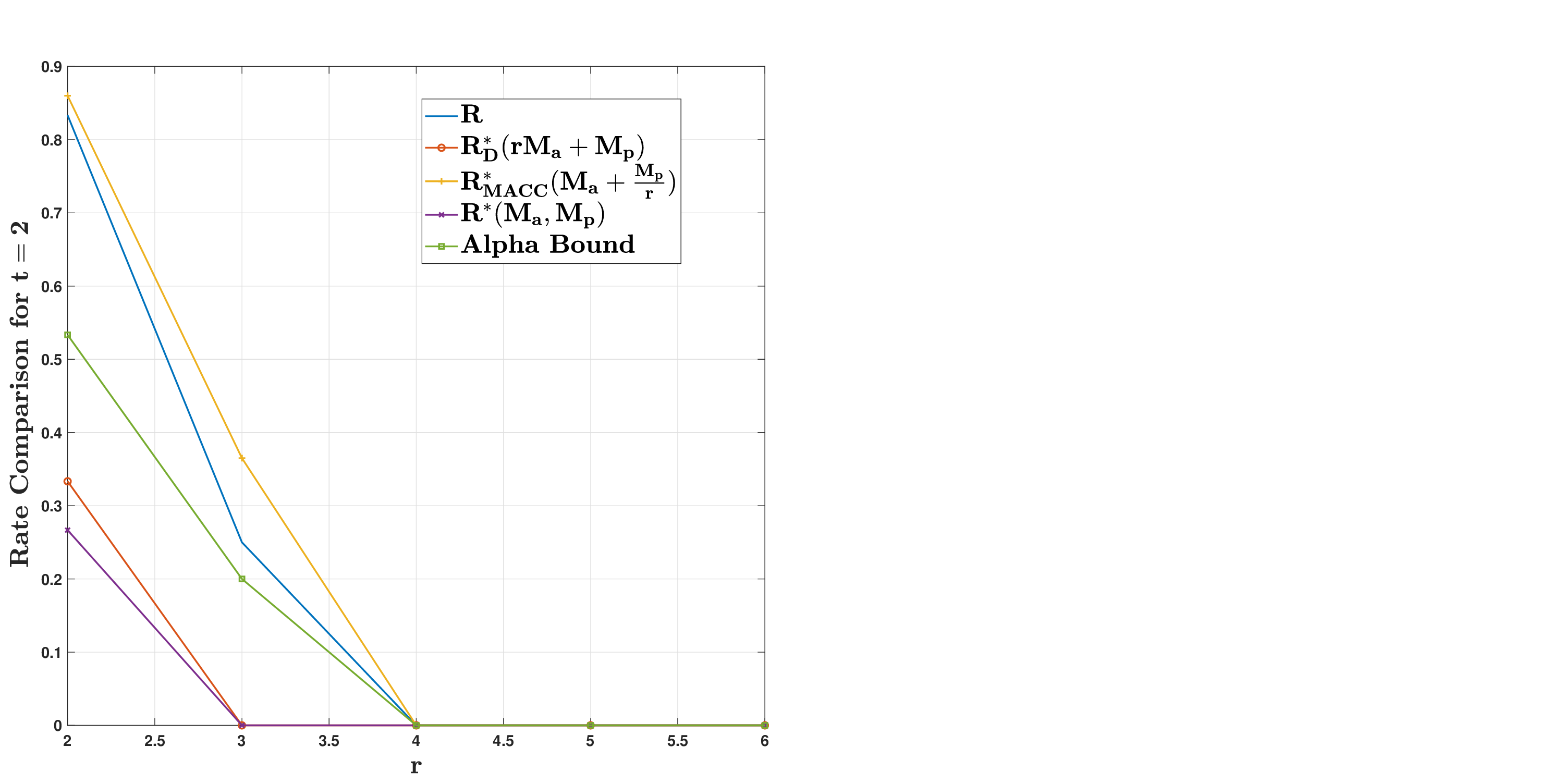}
			\caption{Rate vs. $r$ for $t=2$ and $M_p = \frac{N}{K}$.}
			\label{fig6}
		\end{figure}
		\section{Optimality for $\Lambda=4$ Case}
		\label{optimalityforlambda4}
		In this section, we examine the case of the CMAP coded caching system with $\Lambda=4$ access caches, exploring different combinations of access degree $r$, access cache memory $M_a$, and private cache memory $M_p$. We present a novel placement policy designed to accommodate multiple values of $M_p$, along with the corresponding transmissions made by the server for each combination of $M_a$, $r$, and $M_p$. 
		
		\textit{Placement Policy}: Each file is split into $\binom{\Lambda}{t_a}$ non-overlapping subfiles of equal size as shown:
		\begin{align*}
			W_n=\{W_{n,\mathcal{S}}:\mathcal{S}\subseteq[1,\Lambda], |\mathcal{S}|=t_a,\forall n\in[1,N]\},
		\end{align*}where $t_a=\frac{\Lambda M_a}{N}$, is the access cache replication factor. The contents of the access cache $i$ is 
		\begin{align}
			Z_i=\{W_{n,\mathcal{S}}:i\in\mathcal{S}, \mathcal{S}\subseteq[1,\Lambda], |\mathcal{S}|=t_a,\forall n\in[1,N]\},
		\end{align}for $i\in[1,\Lambda]$. Each access cache stores $\frac{N\binom{\Lambda-1}{t_a-1}}{\binom{\Lambda}{t_a}}=\frac{Nt_a}{\Lambda}=M_a$ files, satisfying its memory constraint. Note that the placement policy described above is the same as the placement policy described in section \ref{placementpolicy} for $t=t_a$.
		
		We will now describe how the private cache of the users are populated. The server populates the private cache of user $\mathcal{U}$ with mini-subfiles of the subfiles it does not get on connecting to access caches. Each subfile is further split into $\binom{\binom{\Lambda-t_a}{r}}{t_p}$ mini-subfiles, where $t_p=\frac{KM_p}{N}\in\mathbb{Z}^+$ is the private cache memory replication factor. $W_{n,\mathcal{S},\mathcal{U}_{i_1},\mathcal{U}_{i_2},\cdots,\mathcal{U}_{i_{t_p}}}$ denotes the mini-subfile of subfile $\mathcal{S}$ of file $n$ present in the private caches of users $\mathcal{U}_{i_1},\mathcal{U}_{i_2},\cdots,\mathcal{U}_{i_{t_p}}$, for some $i_1,i_2,\cdots, i_{t_p}\in[1,K]$. The contents of the private cache of user $\mathcal{U}$ is
		\begin{align}
			\label{placementprivatecachesoptimalitysection}
			&Z^p_{\mathcal{U}}=\{W_{n,\mathcal{S},\mathcal{T}_1,\mathcal{T}_2,\cdots,\mathcal{T}_{t_p}}:\mathcal{S}\subseteq[1,\Lambda]\setminus\mathcal{U},\nonumber\\&\{\mathcal{T}_i\in\{\mathcal{T}^\prime\subseteq[1,\Lambda]\setminus\mathcal{S},|\mathcal{T}^\prime|=r\},\forall i\in[2,t_p]\},\mathcal{T}_1=\mathcal{U},\nonumber\\&\forall n\in[1,N]\}.
		\end{align}The private cache of each user stores $\frac{N\binom{\Lambda-r}{t_a}\binom{\binom{\Lambda-t_a}{r}-1}{t_p-1}}{\binom{\Lambda}{t_a}\binom{\binom{\Lambda-ta}{r}}{t_p}}=\frac{N\binom{\Lambda-r}{t_a}t_p}{\binom{\Lambda}{t_a}\binom{\Lambda-t_a}{r}}=\frac{N\binom{\Lambda-r}{t_a}t_p}{\binom{\Lambda}{r}\binom{\Lambda-r}{t_a}}=\frac{Nt_p}{\binom{\Lambda}{r}}=M_p$ files, satisfying its memory constraint. Notably, for $t_p=1$, this placement reduces to the one discussed in Section \ref{placementpolicy}. We will now consider the case where $t_a\not\in\mathbb{Z}^+,t_p\not\in\mathbb{Z}^+$. Consider the following cases:
		\begin{enumerate}
			\item $t_a\not\in\mathbb{Z}^+,t_p\in\mathbb{Z}^+$. In this case, memory sharing is done in access caches, while no memory sharing is needed for the private caches.
			\begin{remark}
				Consider $M_a$ such that $t_a=\frac{\Lambda M_a}{N}$ is not an integer. Let $M_1=\frac{\lceil t_a\rceil N}{\Lambda}$ and $M_2=\frac{\lfloor t_a\rfloor N}{\Lambda}$. Since $M=\frac{t_aN}{\Lambda}$, we know that $M_2\leq M_a\leq M_1$. Hence, $M_a$ can be written as 
				\begin{align*}
					M_a={\alpha_1} M_1 + (1-{\alpha_1})M_2,
				\end{align*}for some $0\leq{\alpha_1}\leq 1$. The file $W_{n}$ is split into $W_{n}^{{\alpha_1}}$, of ${\alpha_1} B$ bits, and $W_{n}^{(1-{\alpha_1})}$, of $(1-{\alpha_1})B$ bits, respectively, $\forall n\in[1,N]$. The file $W_{n}^{\alpha_1}$ is further broken down into subfiles as $W_{n}^{\alpha_1}=\{W_{n,\mathcal{S}}^{\alpha_1}: \mathcal{S}\subseteq[1,\Lambda],|\mathcal{S}|=\lceil t_a\rceil\}$, while the file $W_{n}^{(1-{\alpha_1})}$ is broken into subfiles as $W_{n}^{(1-{\alpha_1})}=\{W_{n,\mathcal{S}}^{(1-{\alpha_1})}: \mathcal{S}\subseteq[1,\Lambda],|\mathcal{S}|=\lfloor t_a\rfloor\}$. The access caches are filled with subfiles $W_{n,\mathcal{S}}^{{\alpha_1}}$ as described in \eqref{placementaccess}, for $t=\lceil t_a\rceil$ and with subfiles $W_{n,\mathcal{S}}^{(1-{\alpha_1})}$, as described in \eqref{placementaccess}, for $t=\lfloor t_a\rfloor$. Thus, every access cache stores $N{\alpha_1}\binom{\Lambda-1}{\lceil t_a\rceil-1}B+N(1-{\alpha_1})\binom{\Lambda-1}{\lfloor t_a\rfloor-1}B$ bits, which is equivalent to $N{\alpha_1}\frac{\binom{\Lambda-1}{\lceil t_a\rceil-1}}{\binom{\Lambda}{\lceil t_a\rceil}}+N(1-{\alpha_1})\frac{\binom{\Lambda-1}{\lfloor t_a\rfloor-1}}{\binom{\Lambda}{\lfloor t_a\rfloor}}=\frac{N{\alpha_1}\lceil t_a\rceil}{\Lambda}+\frac{N(1-{\alpha_1})\lfloor t_a\rfloor}{\Lambda}={\alpha_1} M_1 +(1-{\alpha_1})M_2=M_a$ files, satisfying its memory constraint. 
				
				The private caches of the users will be populated with the mini-subfiles of $W_{n,\mathcal{S}}^{\alpha}$ and $W_{n,\mathcal{S}}^{(1-\alpha)}$ as described in \eqref{placementprivatecachesoptimalitysection} for $\lceil t_a\rceil$ and $\lfloor t_a \rfloor$. Every private cache stores $\frac{\alpha_1N\binom{\Lambda-r}{\lceil t_a\rceil}\binom{\binom{\Lambda-\lceil t_a\rceil}{r}-1}{t_p-1}}{\binom{\Lambda}{\lceil t_a\rceil}\binom{\binom{\Lambda-\lceil ta\rceil}{r}}{t_p}}+\frac{(1-\alpha_1)N\binom{\Lambda-r}{\lfloor t_a\rfloor}\binom{\binom{\Lambda-\lfloor t_a\rfloor}{r}-1}{t_p-1}}{\binom{\Lambda}{\lfloor t_a\rfloor}\binom{\binom{\Lambda-\lfloor ta\rfloor}{r}}{t_p}}=\frac{\alpha_1Nt_p}{\binom{\Lambda}{r}}+\frac{(1-\alpha_1)Nt_p}{\binom{\Lambda}{r}}=\alpha_1M_p+(1-\alpha_1)M_p=M_p$, satisfying its memory constraint. 
			\end{remark}
			\item $t_a\in\mathbb{Z}^+,t_p\not\in\mathbb{Z}^+$. For this case, memory sharing is done for the private caches, and not for the access caches.
			\begin{remark}
				Consider $M_p$ such that $t_p=\frac{KM_p}{N}$ is not an integer, where $K=\binom{\Lambda}{r}$. Let $M_3=\frac{\lceil t_p\rceil N}{K}$ and $M_4=\frac{\lfloor t_p\rfloor N}{K}$. Since $M_p=\frac{t_pN}{K}$, we know that $M_4\leq M_a\leq M_3$. Hence, $M_p$ can be written as 
				\begin{align*}
					M_p=\alpha_2 M_3 + (1-\alpha_2)M_4,
				\end{align*}for some $0\leq\alpha_2\leq 1$. The subfile $W_{n,\mathcal{S}}$ is split into $W_{n,\mathcal{S}}^{\alpha_2}$, of $\alpha_2B_s$ bits, and $W_{n,\mathcal{S}}^{(1-\alpha_2)}$, of $(1-\alpha_2)B_s$ bits, respectively, where $B_s$ is the size of a subfile. The file $W_{n,\mathcal{S}}^{\alpha_2}$ is further broken down into mini-subfiles as $W_{n,\mathcal{S}}^{\alpha_2}=\{W_{n,\mathcal{S},\mathcal{T}_1,\mathcal{T}_2,\cdots,\mathcal{T}_{\lceil t_p\rceil}}^{\alpha_2}: \mathcal{T}_i\subseteq[1,\Lambda]\setminus S,|\mathcal{T}_i|=r,\forall i\in[1,\lceil t_p\rceil]\}$, while the subfile $W_{n,\mathcal{S}}^{(1-\alpha_2)}$ is broken into mini-subfiles as $W_{n,\mathcal{S}}^{(1-\alpha_2)}=\{W_{n,\mathcal{S},\mathcal{T}_1,\mathcal{T}_2,\cdots,\mathcal{T}_{\lfloor t_p\rfloor}}^{(1-\alpha_2)}: \mathcal{T}_i\subseteq[1,\Lambda]\setminus S,|\mathcal{T}_i|=r,\forall i\in[1,\lfloor t_p\rfloor]\}$. The private caches of the users are filled with the mini-subfiles as described in \eqref{placementprivatecachesoptimalitysection}. Thus, every private cache stores $N\alpha_2\frac{\binom{\Lambda-r}{t_a}\binom{\binom{\Lambda-t_a}{r}-1}{\lceil t_p\rceil-1}}{\binom{\Lambda}{t_a}\binom{\binom{\Lambda-ta}{r}}{\lceil t_p\rceil}}+N(1-\alpha_2)\frac{N\binom{\Lambda-r}{t_a}\binom{\binom{\Lambda-t_a}{r}-1}{\lfloor t_p\rfloor-1}}{\binom{\Lambda}{t_a}\binom{\binom{\Lambda-ta}{r}}{\lfloor t_p\rfloor}}=N\alpha_2\frac{\lceil t_p\rceil}{K}+N(1-\alpha_2)\frac{\lfloor t_p\rfloor}{K}=\alpha_2 M_3+(1-\alpha_2)M_4=M_p$ files, satisfying its memory constraint. 
			\end{remark}
			\item $t_a\not\in\mathbb{Z}^+,t_p\not\in\mathbb{Z}^+$. In this case, memory sharing is done for both the private and the access caches, as explained above.
		\end{enumerate}

		We now examine all non-trivial combinations of $M_a, M_p$, and $r$. In other words, we focus on scenarios where a user can access only a part of the library. To calculate the mini-subfiles accessible to a user, we consider the subfiles obtained from the access and private caches. From the access caches, a user obtains $N\left[\binom{\Lambda}{t_a}-\binom{\Lambda-r}{t_a}\right]$, each of which yields all associated mini-subfiles. Thus, the total number of mini-subfiles obtained from the access caches are $N\left[\binom{\Lambda}{t_a}-\binom{\Lambda-r}{t_a}\right]\binom{\binom{\Lambda-t_a}{r}}{t_p}$. Additionally, a user receives $N\binom{\Lambda-r}{t_a}\binom{\binom{\Lambda-ta}{r}-1}{t_p-1}$ mini-subfiles from its private cache. Therefore, the total number of mini-subfiles accessible to a user is $N\left(\left[\binom{\Lambda}{t_a}-\binom{\Lambda-r}{t_a}\right]\binom{\binom{\Lambda-t_a}{r}}{t_p} + \binom{\Lambda-r}{t_a}\binom{\binom{\Lambda-ta}{r}-1}{t_p-1}\right)$. Since each file in broken down into $\binom{\Lambda}{t_a}\binom{\binom{\Lambda-t_a}{r}}{t_p}$ mini-subfiles, every user has access to $\frac{N\left(\left[\binom{\Lambda}{t_a}-\binom{\Lambda-r}{t_a}\right]\binom{\binom{\Lambda-t_a}{r}}{t_p} + \binom{\Lambda-r}{t_a}\binom{\binom{\Lambda-ta}{r}-1}{t_p-1}\right)}{\binom{\Lambda}{t_a}\binom{\binom{\Lambda-t_a}{r}}{t_p}}=N\left(1-\frac{\binom{\Lambda-ta}{r}-t_p}{\binom{\Lambda}{r}}\right)$ files. 
		
		Using the above calculations, for $\Lambda = 4$, the only non-trivial $(r,M_a,M_p)$ triplets are $(2,\frac{N}{4},\frac{N}{6})$ and $(2,\frac{N}{4},\frac{N}{3})$. For all the other points, the users have access to the entire library.We discuss optimality for these two cases for a CMAP system with  $\Lambda = 4$ access caches.  Note that the $(r=2,M_a=\frac{N}{4},M_p=\frac{N}{6})$ case corresponds to $r=2$, $t_a=1$, $t_p=1$ and the  $(r=2,M_a=\frac{N}{4},M_p=\frac{N}{3})$ triplet corresponds to the case where $r=2$, $t_a=1$, and, $t_p=2$.

			\subsection{Optimality for \protect{$(\Lambda=4, N \geq K, r=2, t_a=1, t_p=1)$}}
			This case has been discussed in Example \ref{example1}. The optimality is given for the case with $N \geq K  = \binom{\Lambda}{r}$ w.r.t the worst-case rate achieved. We begin this section by defining regular delivery schemes for coded caching systems.			
			\begin{defn}
				A delivery scheme for a coded caching system is said to be $g-$regular if each transmission in it is a coded combination of $g$ mini-subfiles.
			\end{defn}
			Note that for the CMAP coded caching system with $r=2$, $t_a=1$, and $t_p=1$, we have $\alpha \geq 5$. This means that the server will need to make at least five transmissions to satisfy the demands of all users.

			In the case considered, with subpacketization $F=12$, there are $K=6$ users, and each user has $8$ mini-subfiles. Therefore, each user demands $12-8=4$ mini-subfiles. Thus, the total number of mini-subfiles demanded by all the users together is $6 \times (12-4)=24$. Since five does not divide twenty-four, the minimum number of server transmissions in a regular delivery scheme is six. This has been achieved in Example \ref{example1}.
			
			Hence, the CMAP coded caching system with $\Lambda=4$ and parameters $r=2$, $t_a=1$, and $t_p=1$ is optimal w.r.t worst-case when $N \geq K$ under the given placement and regular-delivery scheme assumption.
			
				\subsection{Optimality for \protect{$(\Lambda=4, N \geq K, r=2, t_a=1, t_p=2)$}.}			
			Consider a CMAP coded caching system having a central server with a library of $6$ files, denoted as $\{W_1, W_2, W_3, W_4, W_5, W_6\}$ and a set of $\Lambda=4$ access caches, each capable of storing $M_a=1.5$ files. There are $K=6$ users, equipped with a private cache of capacity $M_a=2$ files, connecting to the system such that every user connects to a unique subset of $r=2$ access caches. Since $t_a=1$, each file $W_n$ is split into $\binom{\Lambda}{t_a}=4$ subfiles as $W_n=\{W_{n,1},W_{n,2},W_{n,3},W_{n,4}\}$ for $n\in[1,6]$. The server populates the access caches are shown below:
			\begin{align*}
				&Z_1=\{W_{n,1},\forall n\in[1,6]\},\\
				&Z_2=\{W_{n,2},\forall n\in[1,6]\},\\
				&Z_3=\{W_{n,3},\forall n\in[1,6]\},\text{ and},\\
				&Z_4=\{W_{n,4},\forall n\in[1,6]\}.
			\end{align*}Each access cache stores $\frac{6}{4}=1.5$ files, satisfying its memory constraint. Every subfile is broken down into $\binom{\binom{\Lambda-t_a}{r}}{t_p}=\binom{3}{2}=3$ mini-subfiles. Hence, the subpacketization is $F=12$. The server populates the private caches of the users with mini-subfiles of the subfiles the user does not obtain on connecting to access caches. The contents of the private caches of users are as shown below:
			\begin{align*}
				&Z^p_{12}=\{W_{n,3,12,14},W_{n,3,12,24},W_{n,4,12,13},W_{n,4,12,23}\},\\
				&Z^p_{13}=\{W_{n,2,13,14},W_{n,2,13,34},W_{n,4,12,13},W_{n,4,13,23}\},\\
				&Z^p_{14}=\{W_{n,2,13,14},W_{n,2,14,34},W_{n,3,12,14},W_{n,3,14,24}\},\\
				&Z^p_{23}=\{W_{n,1,23,24},W_{n,1,23,34},W_{n,4,12,23},W_{n,4,13,23}\},\\
				&Z^p_{24}=\{W_{n,1,23,24},W_{n,1,24,34},W_{n,3,12,24},W_{n,3,14,24}\},\text{ and},\\
				&Z^p_{34}=\{W_{n,1,23,34},W_{n,1,24,34},W_{n,2,13,34},W_{n,2,14,34}\},
			\end{align*}$\forall n\in[1,6].$ Each private cache stores $\frac{24}{12}=2$ files, satisfying its memory constraint. 
			
			The transmissions made by the server for the demand vector $\mathbf{d}=(d_{\mathcal{U}}:\mathcal{U}\subseteq[1,\Lambda],|\mathcal{U}|=r)$ are as presented below:
			\begin{enumerate}
				\item $W_{d_{{12}},3,14,24}+W_{d_{{13}},2,14,34}+W_{d_{{14}},3,12,24}+W_{d_{{23}},1,24,34}+W_{d_{{24}},3,12,14}+W_{d_{{34}},1,23,24}.$
				\item $W_{d_{{12}},4,13,23}+W_{d_{{13}},4,13,23}+W_{d_{{14}},2,13,34}+W_{d_{{23}},4,12,13}+W_{d_{{24}},1,23,34}+W_{d_{{34}},2,13,14}.$
			\end{enumerate}
			It can be verified that the above transmissions are decodable and each of the six users recover the two missing mini-subfiles of their requested file from the above two transmissions. Since the server makes two transmissions, the rate $R=\frac{2}{12}=\frac{1}{6}$.
			\subsubsection{Discussion on Optimality}
			Consider the cut-set bound derived in Theorem \ref{thm3}. For $s=1$, we have $R^{\textasteriskcentered}(M_a,M_p)\geq 1-\frac{rM_a+M_p}{N}$. For the case discussed in this section, we have $R^{\textasteriskcentered}(1.5,2)\geq 1-\frac{2\frac{N}{4}+\frac{N}{3}}{N}=1-\frac{1}{2}-\frac{1}{3}=\frac{1}{6}$. The scheme is optimal for this case since it achieves the cut-set-based lower bound in Theorem \ref{thm1}.
			
			\section{Conclusions}
			\label{conclusions}
			We introduced the CMAP coded caching for which we provided an achievability scheme and characterized its rate. Further, we presented a lower bound on the number of transmissions for the proposed scheme using index coding techniques. We bounded the optimal worst-case rate under uncoded placement using the rates of the MAN schemes in \cite{MAN} and \cite{PD}. We showed using numerical comparisons  that the rate of the proposed scheme approaches the lower bound in certain memory regimes. For the special case when the CMAP system has four access caches, the optimality for all valid memory pairs were also discussed. 
			\section*{Acknowledgment}
			This work was supported partly by the Science and Engineering Research Board (SERB) of the Department of Science and Technology (DST), Government of India, through J.C Bose National Fellowship to Prof. B. Sundar Rajan.

		\end{document}